%



\documentclass{article}
\usepackage[margin=1in]{geometry}  

\usepackage{listings}
\usepackage{amsmath,mathtools}
\usepackage{amsthm}
\usepackage{tikz}
\usepackage{caption,subcaption}
\usepackage{array}
\usepackage{mdwmath}
\usepackage{multirow}
\usepackage{mdwtab}
\usepackage{eqparbox}
\usepackage{multicol}
\usepackage{amsfonts}
\usepackage{multirow,bigstrut,threeparttable}
\usepackage{amsthm}
\usepackage{array}
\usepackage{bbm}
\usepackage{subcaption}
\usepackage{epstopdf}
\usepackage{mdwmath}
\usepackage{mdwtab}
\usepackage{eqparbox}
\usepackage{latexsym}
\usepackage{amssymb}
\usepackage{bm}
\usepackage{amssymb}
\usepackage{graphicx}
\usepackage{mathrsfs}
\usepackage{epsfig}
\usepackage{psfrag}
\usepackage{setspace}
\usepackage{tikz-cd}

\usepackage[
CJKbookmarks=true,
bookmarksnumbered=true,
bookmarksopen=true,
colorlinks=true,
citecolor=red,
linkcolor=blue,
anchorcolor=red,
urlcolor=blue
]{hyperref}
\usepackage{cleveref}
\usepackage{algorithm}
\usepackage{algorithmic}
\usepackage{stfloats}
\usepackage[
autocite    = superscript,
backend     = bibtex, 
sortcites   = true,
style       = authoryear 
]{biblatex} 
\addbibresource{ref.bib}
\usepackage{nameref}
\usepackage[normalem]{ulem}

\usepackage{soul}

\input xy
\xyoption{all}


\theoremstyle{plain}

\newtheorem{proposition}{Proposition}
\newtheorem{lemma}{Lemma}[section]

\crefname{theorem}{Theorem}{Theorems}
\crefname{proposition}{Proposition}{Propositions}
\crefname{corollary}{Corollary}{Corollaries}

\theoremstyle{definition}

\newtheorem{remark}{Remark}

\crefname{definition}{Definition}{Definitions}
\crefname{remark}{Remark}{Remarks}





\def\EE{\mathbb{E}}

\def\NN{\mathbb{N}}

\def\PP{\mathbb{P}}


\def\calA{\mathcal{A}}
\def\calB{\mathcal{B}}

\def\calD{\mathcal{D}}

\def\calF{\mathcal{F}}
\def\calG{\mathcal{G}}
\def\calH{\mathcal{H}}
\def\calI{\mathcal{I}}

\def\calN{\mathcal{N}}






\def\1{\mathbbm{1}}

\def\independenT#1#2{\mathrel{\rlap{$#1#2$}\mkern2mu{#1#2}}}

\usepackage{booktabs}
\setlength{\heavyrulewidth}{1.5pt}
\setlength{\abovetopsep}{4pt}
\usepackage{adjustbox}
\usepackage{multirow}
\usepackage{listings}





\theoremstyle{plain}


\usepackage{xspace}


\newcommand\indep{\protect\mathpalette{\protect\independenT}{\perp}}
\def\independenT#1#2{\mathrel{\rlap{$#1#2$}\mkern2mu{#1#2}}}

\definecolor{myblue}{rgb}{.8, .8, 1}
\definecolor{mathblue}{rgb}{0.2472, 0.24, 0.6} 
\definecolor{mathred}{rgb}{0.6, 0.24, 0.442893}
\definecolor{mathyellow}{rgb}{0.6, 0.547014, 0.24}

\usepackage{enumitem}

\newcommand{\No}{{n}}
\newcommand{\NoNull}{{n_0}}

\newcommand{\NoNc}{m}
\newcommand{\probNull}{{\pi_0}}

\newcommand{\pval}[1]{{p_{#1}}}

\newcommand{\testStatistics}[1]{{T_{#1}}}

\newcommand{\cdfTestStatistics}[1]{{F_{#1}}}
\newcommand{\cdfTestStatisticsNull}{{F_{0}}}
\newcommand{\cdfTestStatisticsNonNull}{{F_{1}}}




\newcommand{\hypothesis}[1]{{H_{#1}}}

\newcommand{\hypothesisIndex}[1]{{\calI_{#1}}}



\title{Adaptive Storey's null proportion estimator}

\author{Zijun Gao\thanks{Marshall Business School, University of Southern California, CA, USA. Email: \{zijungao\}@marshall.usc.edu.}}

\begin{document}

\maketitle

\begin{abstract}
    False discovery rate (FDR) is a commonly used criterion in multiple testing and the Benjamini-Hochberg (BH) procedure is arguably the most popular approach with FDR guarantee.
    To improve power, the adaptive BH procedure has been proposed by incorporating various null proportion estimators, among which Storey's estimator has gained substantial popularity.
    The performance of Storey's estimator hinges on a critical hyper-parameter, where a pre-fixed configuration lacks power and existing data-driven hyper-parameters compromise the FDR control.
    In this work, we propose a novel class of adaptive hyper-parameters and establish the FDR control of the associated BH procedure using a martingale argument.
    Within this class of data-driven hyper-parameters, we present a specific configuration designed to maximize the number of rejections and characterize the convergence of this proposal to the optimal hyper-parameter under a commonly-used mixture model.
    We evaluate our adaptive Storey's null proportion estimator and the associated BH procedure on extensive simulated data and a motivating protein dataset. 
    Our proposal exhibits significant power gains when dealing with a considerable proportion of weak non-nulls or a conservative null distribution.
\end{abstract}

\section{Introduction}\label{sec:introduction}

The rapid advancement of data collection technologies, including high-throughput experiments, has sparked strong interest in the multiple testing problem over the past a few decades and led to various remarkable statistical advancements.
To fix notations, we suppose there are $\No$ hypotheses $H_i$ associated with p-values $p_i \in (0,1)$, and $p_i \sim U(0,1)$ for $H_i = 0$.
Provided with a multiple testing method, we use $V$ and $R$ to represent the number of falsely rejected null hypotheses (false discoveries) and the total number of rejections, respectively. 
In this paper, we focus on the control of false discovery rate (FDR) \parencite{benjamini1995controlling} defined as the expected proportion of false discoveries $\EE[V/(R \vee 1)]$.

The Benjamini-Hochberg (BH) procedure \parencite{benjamini1995controlling} is a widely-used multiple testing method to control FDR.
The BH procedure applied to independent p-values at level $q$ effectively controls the FDR at a lower level $\probNull q$.
Here $\NoNull$ denotes the number of null hypotheses and $\probNull = \NoNull / \No$ denotes the proportion of nulls.
The BH procedure might be overly conservative when $\probNull$ is small, a point reiterated in various papers, including \cite{benjamini2006adaptive, finner_controlling_2009}.

To mitigate the conservativeness, the adaptive BH has been proposed, which takes an estimator of the null proportion $\hat{\pi}_0$ and applies the standard BH procedure at the level $q/\hat{\pi}_0$, typically higher than $q$.
There are several null proportion estimators\footnote{We remark that estimating $\pi_0$, also under the name prevalence estimation, is not limited to the realm of FDR control. We refer readers to \cite[Chapter 8]{cui_handbook_2021} for an in-depth review of this topic.} whose associated adaptive BH procedure properly controls the FDR.
\cite{benjamini2006adaptive} apply the BH procedure at a reduced level $q / (1 + q)$ in step one, and estimate the null proportion by $(1+q)(1 - R_1/\No)$ in step two, where $R_1$ denotes the number of rejections from the first step.
In situations where the signals are weak, it's possible that $R_1$ is close to zero and the resulting null proportion estimator exceeds one, leading to the associated adaptive BH procedure being more conservative than the standard counterpart.
Another popular null estimator, which we focus on in this paper, takes the form
\begin{align}\label{eq:Storey.null.prop}
    \hat{\pi}_{0}^{\lambda} = \frac{1 + \sum_{i=1}^n \1_{\{p_i \ge \lambda\}}}{n(1 - \lambda)}.
\end{align}
Here $\lambda \in (0,1)$ is a pre-fixed hyper-parameter.
The estimator appeared in \cite{schweder_plots_1982}, acquired the unofficial name ``Storey's estimator'' since \cite{storey2002direct}, and has gained significant recognition after  \cite{storey2004strong} showing the adaptive BH procedure with $ \hat{\pi}_{0}^{\lambda}$ using a pre-fixed $\lambda$ controls FDR.

The performance of Storey's estimator and the associated adaptive BH procedure may depend heavily on the choice of $\lambda$, which we illustrate using a motivating protein study.
The protein dataset is collected in a study aiming to identify cell membrane proteins that are labeled with biotin by horseradish peroxidase (HRP).
Before measuring the abundance of proteins by mass spectrometry, a process is executed to capture the biotinylated proteins that are likely to be enriched by streptavidin bead and to filter out the majority of background proteins (proteins that are not affected by HRP) \parencite{shuster2022situ}.
\Cref{fig:protein.hist.intro} (a) shows the histogram of the p-values derived from the protein abundances: there appear to be a few strong non-nulls with p-values very close to zero; a substantial fraction of weak non-nulls with moderately small p-values; a proportion of conservative nulls corresponding to p-values larger than $0.8$ which stochastically dominate $U(0.8,1)$.
As the hyper-parameter $\lambda$ grows from $0.2$ to $1$, Storey's estimator $\hat{\pi}_0^{\lambda}$ first decreases then increases and eventually exceeds one (\Cref{fig:protein.hist.intro} (b)). 
Meanwhile, the number of rejections produced by the associated adaptive BH procedure at the target FDR level $0.2$ increases from $16$ to over $30$, and subsequently falls back to below $16$ (\Cref{fig:protein.hist.intro} (c)).

Multiple ways have been suggested for choosing $\lambda$.
For pre-fixed $\lambda$, the configuration $\lambda = 1/2$ is employed by \cite{storey2003statistical}, while \cite{blanchard2009adaptive} suggest using a smaller $\lambda = q$. 
For data-driven options, \cite{storey2002direct} suggests to tune $\lambda$ by bootstrap; \cite{storey2003statistical} discuss applying spline smoothing to $\hat{\pi}_0^{\lambda}$ computed with a sequence of $\lambda$ and evaluating the fit at $\lambda = 1$.
Nevertheless, empirical evidence \parencite{black_note_2004, jiang_estimating_2008} shows that the data-driven null proportion estimators may be downward-biased and the associated adaptive BH procedure may not control FDR.

In this paper, we propose a class of data-driven $\hat{\lambda}$ with the desirable FDR guarantee.
In particular, we introduce a filtration and let $\hat{\lambda}$ be arbitrary stopping time with regard to the filtration. 
We prove the associated adaptive BH procedure controls the FDR by constructing a martingale adapted to the filtration and employing the optional stopping theorem.
Among this class of data-driven hyper-parameters, we provide a concrete stopping time designed to yield the maximal number of rejections, and establish its convergence to the optimal $\lambda$ maximizing the super-population power under a mixture model.
On the motivating protein dataset, a pre-fixed $\lambda = q = 0.2$ gives $\hat{\pi}_0^{0.2} = 0.85$, and the corresponding BH yields $16$ rejections; our method adaptively chooses
$\hat{\lambda} = 0.68$, leading to $\hat{\pi}_0^{0.68} = 0.62$ and $31$ rejections.
In addition, extensive simulations have also demonstrated the proposed data-driven $\hat{\lambda}$ consistently results in more rejections than a pre-fixed $\lambda$ across various data generation mechanisms.

Below, we describe the class of data-driven $\hat{\lambda}$
in \Cref{sec:method} and prove the finite-sample FDR control in \Cref{sec:theory}.
We delve into a specific stopping time devised to maximize the power in \Cref{sec:stopping.time}. 
\Cref{sec:simulations}, \Cref{sec:real.data} contain the simulation studies, the analysis of the motivating protein dataset, respectively.
We conclude with discussions on future directions in \Cref{sec:discussion}.

\begin{figure}[tbp]
        \centering
        \begin{minipage}{0.3\textwidth}
                \centering
                \includegraphics[clip, trim = 0cm 0cm 0cm 0cm, height = 5cm]{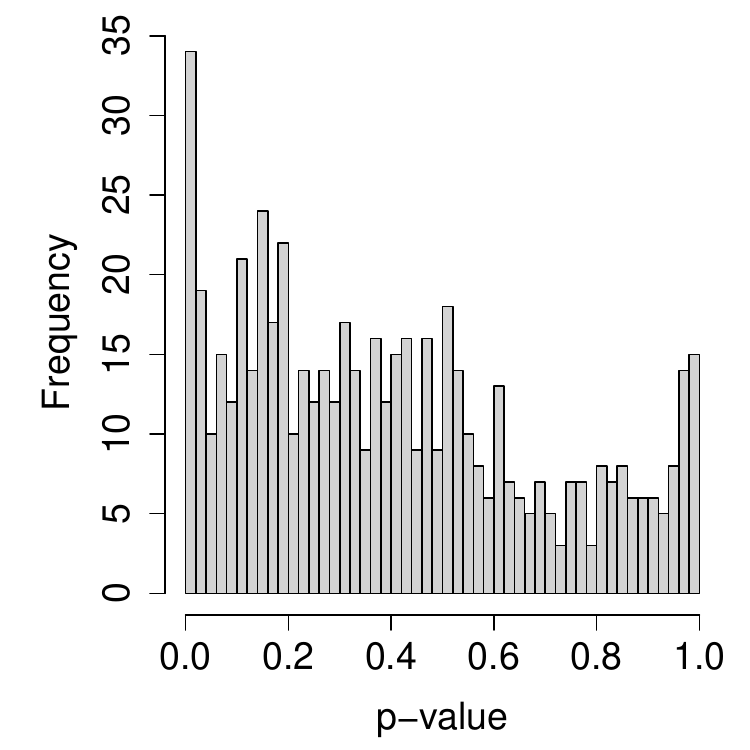}
                \subcaption*{\qquad (a)}
        \end{minipage}
        \begin{minipage}{0.3\textwidth}
            \centering
            \includegraphics[clip, trim = 0cm 0cm 0cm 0cm, height = 5cm]{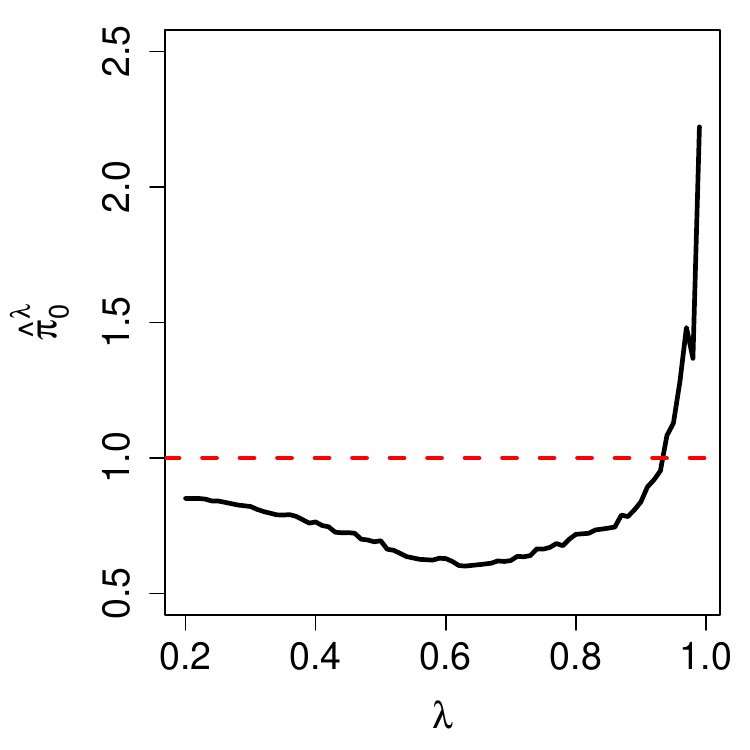}
                \subcaption*{\qquad \quad (b)}
        \end{minipage}
        \begin{minipage}{0.3\textwidth}
                \centering
                \includegraphics[clip, trim = 0cm 0cm 0cm 0cm, height = 5cm]{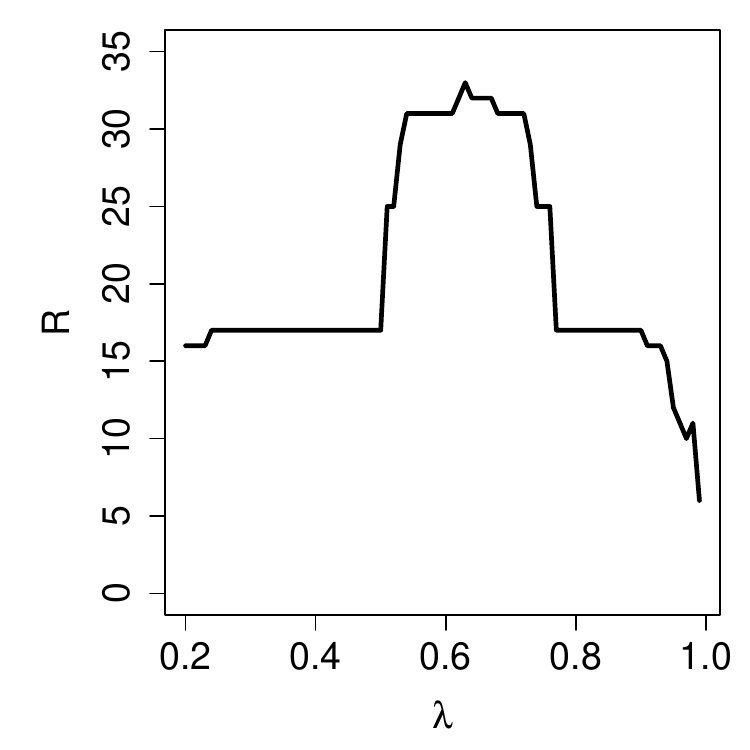}
                \subcaption*{\qquad \quad (c)}
        \end{minipage}
            \caption{
            Motivating protein dataset. 
            Panel (a) plots the p-value histogram of $585$ proteins.
            Panel (b) plots Storey's estimator~\eqref{eq:Storey.null.prop} across $\lambda$, and panel (c) shows the number of rejections of the associated adaptive BH procedure.
            }
        \label{fig:protein.hist.intro}
\end{figure}

\section{A class of adaptive Storey's null proportion estimators}\label{sec:adaptive.Storey}


\subsection{Stopping time $\hat{\lambda}$}\label{sec:method}

Our data-driven hyper-parameter hinges on the filtration defined as,
\begin{align*}
    \calG_{t} := \sigma\left(\sum_{i \in [n]} \1_{\{p_i \ge s\}}: q \le s \le t\right), \quad q \le t \le 1.
\end{align*}
The filtration is increasing and only depends on the p-values larger than $q$.
The increasing filtration is different from the decreasing filtration $\calH_{t} = \sigma\left(\sum_{i \in [n]} \1_{\{p_i \ge s\}}: s \ge t\right)$ used by \cite{storey2004strong} to show the BH procedure controls FDR.

We propose $\hat{\lambda}$ to be any stopping time with regard to $\calG_t$.
For instance, $\hat{\lambda}$ can be a pre-fixed value independent of the data, $\inf\{t: \hat{\pi}_0^{t} \le 0.8 \hat{\pi}_0^{q}\}$, $\inf\{t: \hat{\pi}_0^{t} \le \hat{\pi}_0^{q} - 0.2\}$. In \Cref{sec:stopping.time}, we describe a less ad-hoc stopping time with the aim to maximize the number of rejections.
Provided with a stopping time $\hat{\lambda}$, we further compute $\hat{\pi}_{0}^{\hat{\lambda}}$, referred to as adaptive Storey (AS) estimator below, by plugging $\hat{\lambda}$ in Eq.~\eqref{eq:Storey.null.prop}.
Finally, we combine $\hat{\pi}_{0}^{\hat{\lambda}}$ and the BH procedure as described in \Cref{algo:BH.adaptive.Storey}.

\Cref{fig:BH.adaptive.Storey} provides an illustration of \Cref{algo:BH.adaptive.Storey}:
in step one, starting from the FDR level $q$, we gradually increase the hyper-parameter $\lambda$, along which the p-values $p_i \ge q$ are sequentially disclosed in ascending order. Once a pre-defined stopping rule is satisfied, we stop and output the hyper-parameter value $\hat{\lambda}$;
in step two, also initiated at $q$, we decrease the threshold $\tau$ of the BH procedure, and inspect $p_i < q$ in descending order. The process is terminated when the estimated FDR augmented by $\hat{\pi}_0^{\hat{\lambda}}$, i.e., $\widehat{\text{FDP}}(t; \hat{\pi}_0^{\hat{\lambda}}) := \hat{\pi}_0^{\hat{\lambda}} \No t/{\sum_{i=1}^{\No}\1_{\{p_{i} \le t\}}}$, falls below the target level $q$.  The value at which we stop is reported as the rejection threshold of the adaptive BH procedure.

\begin{figure}[h]
    \centering
    \begin{tikzpicture}
    \centering
    \draw[->] (-0.5,0) -- (10.5,0);
    \draw (0,0.1) -- (0,-0.1) node[below] {$0$};
    \draw[dashed] (5,1.2) -- (5,-0.2) node[below] {$q$};
    \draw (8,0.1) -- (8,-0.1) node[below] {$\hat{\lambda}$};
    \draw (10,0.1) -- (10,-0.1) node[below] {$1$};
    \draw (0.25,0.1) -- (0.25,-0.1) node[above] {$p_{(1)}$};
    \draw (0.85,0.1) -- (0.85,-0.1) node[above] {$p_{(2)}$};
    \draw (1.5,0.1) -- (1.5,-0.1) node[above] {$p_{(3)}$};
    \draw (4,0.1) -- (4,-0.1) node[above] {$p_{(4)}$};
    \draw (5.5,0.1) -- (5.5,-0.1) node[above] {$p_{(5)}$};
    \draw (6.5,0.1) -- (6.5,-0.1) node[above] {$p_{(6)}$};
    \draw (7.25,0.1) -- (7.25,-0.1) node[above] {$p_{(7)}$};
    \draw (9,0.1) -- (9,-0.1) node[above] {$p_{(8)}$};
    
    \draw[->] (5.1,0.75) -- (8,0.75) node[midway, above] {AS};
    \node at (-2,0) {Step 1};
    \end{tikzpicture}

    \begin{tikzpicture}
    \centering
    \draw[->] (-0.5,0) -- (10.5,0);
    \draw (0,0.1) -- (0,-0.1) node[below] {$0$};
    \draw (2,0.1) -- (2,-0.1) node[below] {$\hat{\tau}$};
    \draw[dashed] (5,1.2) -- (5,-0.2) node[below] {$q$};
    \draw (8,0.1) -- (8,-0.1) node[below] {$\hat{\lambda}$};
    \draw (10,0.1) -- (10,-0.1) node[below] {$1$};
    \draw (0.25,0.1) -- (0.25,-0.1) node[above] {$p_{(1)}$};
    \draw (0.85,0.1) -- (0.85,-0.1) node[above] {$p_{(2)}$};
    \draw (1.5,0.1) -- (1.5,-0.1) node[above] {$p_{(3)}$};
    \draw (4,0.1) -- (4,-0.1) node[above] {$p_{(4)}$};
    \draw (5.5,0.1) -- (5.5,-0.1) node[above] {$p_{(5)}$};
    \draw (6.5,0.1) -- (6.5,-0.1) node[above] {$p_{(6)}$};
    \draw (7.25,0.1) -- (7.25,-0.1) node[above] {$p_{(7)}$};
    \draw (9,0.1) -- (9,-0.1) node[above] {$p_{(8)}$};
    
    \draw[->] (4.9,0.75) -- (2,0.75) node[midway, above] {BH with $\hat{\pi}_0^{\hat{\lambda}}$};
    \node at (-2,0) {Step 2};
    \end{tikzpicture}
    \caption{The step-up adaptive BH procedure combined with the step-down AS null proportion estimator.} 
    \label{fig:BH.adaptive.Storey} 
\end{figure}
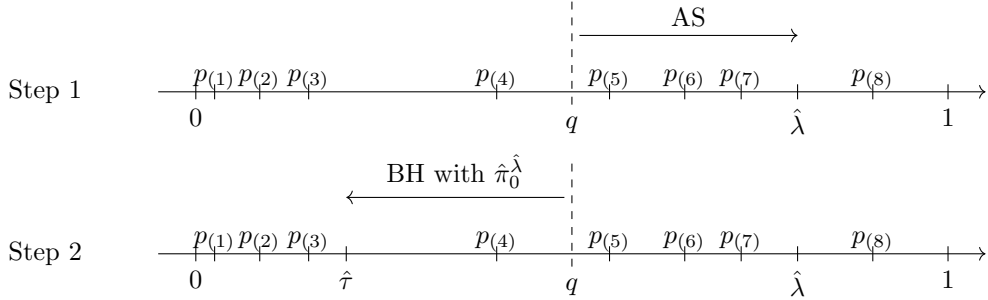

\begin{algorithm}\caption{Adaptive BH with the AS null proportion estimator}\label{algo:BH.adaptive.Storey}
    \begin{algorithmic}
        \STATE \textbf{Input}: p-values $p_i$, $i \in [n]$, FDR level $q$, a stopping rule adapted to $\calG_t$.
         \STATE 1. Compute the stopping time $\hat{\lambda}$ based on the stopping rule. 
         Calculate $\hat{\pi}_{0}^{\hat{\lambda}}$ as in Eq.~\eqref{eq:Storey.null.prop}.
         \STATE 2. Compute the BH rejection threshold
         \begin{align*}
             \hat{\tau} = \sup \left\{t < q: \frac{\hat{\pi}_0^{\hat{\lambda}}nt}{\sum_{i=1}^n \1_{\{p_{i} \le t\}}} \le q\right\}.
         \end{align*}
         Here we regard $\sup \emptyset = -\infty$.\\
         \STATE \textbf{Output}: rejection set $\{H_i: p_i \le \hat{\tau}\}$.
    \end{algorithmic}
\end{algorithm}

\subsection{Theoretical FDR guarantee}\label{sec:theory}

\subsubsection{Independent p-values}

The following result demonstrates that the BH procedure combined with the AS null proportion estimator controls the FDR when applied to independent p-values.

\begin{proposition}\label{prop:BH.adaptive.Storey}
    Let $\hat{\lambda}$ be a stopping time with regard to $\calG_t$. 
    Assume for all $i \in [\No]$ such that $H_i = 0$, we have $p_i \sim U(0,1)$ and $p_i \indep \bm (p_j)_{j \neq i}$.
    Then \Cref{algo:BH.adaptive.Storey} controls the FDR at the target level.
\end{proposition}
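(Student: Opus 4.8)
The plan is to mimic the classical martingale proof of FDR control for the adaptive BH procedure (as in Storey--Taylor--Siegmund), but using the \emph{increasing} filtration $\calG_t$ instead of the decreasing one, so that the stopping time $\hat\lambda$ can be handled by the optional stopping theorem. Fix the realized p-values of the non-nulls; we condition on these throughout and take all expectations over the null p-values only. Write $\falsePositive(t) = \sum_{i : H_i = 0} \1_{\{p_i \le t\}}$ for the number of false rejections at threshold $t$, and $\totalPositive(t) = \sum_{i=1}^{\No}\1_{\{p_i \le t\}}$. Since $\hat\tau$ is defined so that $\hat\pi_0^{\hat\lambda}\No\hat\tau / \totalPositive(\hat\tau) \le q$, on the event $\hat\tau \ge 0$ we have $\totalPositive(\hat\tau) \ge \hat\pi_0^{\hat\lambda}\No\hat\tau/q$, hence
\begin{align*}
    \FDP(\hat\tau) = \frac{\falsePositive(\hat\tau)}{\totalPositive(\hat\tau) \vee 1} \le \frac{q}{\No}\cdot\frac{\falsePositive(\hat\tau)}{\hat\pi_0^{\hat\lambda}\,\hat\tau}
    = \frac{q}{\No}\cdot\frac{\falsePositive(\hat\tau)\,\No(1-\hat\lambda)}{\bigl(1 + \sum_{i}\1_{\{p_i \ge \hat\lambda\}}\bigr)\hat\tau}.
\end{align*}
So after taking expectations it suffices to show
\begin{align*}
    \Expect\!\left[\frac{\falsePositive(\hat\tau)(1-\hat\lambda)}{\bigl(1 + \sum_{i}\1_{\{p_i \ge \hat\lambda\}}\bigr)\,\hat\tau}\right] \le 1 .
\end{align*}

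Next I would isolate the two-stage structure. Condition first on $\calG_{\hat\lambda}$, i.e.\ on all the information used to form $\hat\lambda$ and $\hat\pi_0^{\hat\lambda}$ (this $\sigma$-field only involves the counts $\sum_i \1_{\{p_i \ge s\}}$ for $q \le s$, equivalently the p-values that are $\ge q$). The step-down BH threshold $\hat\tau$ lies below $q$ and is a function of the p-values $< q$ together with the single number $\hat\pi_0^{\hat\lambda}$; crucially, conditionally on $\calG_{\hat\lambda}$ the null p-values lying below $q$ are still independent and their conditional law is that of independent uniforms restricted appropriately. Here I would invoke (or re-derive) the standard fact that for the step-up procedure run on $\totalPositive$ with a fixed multiplier, $\Expect\bigl[\falsePositive(\hat\tau)/\hat\tau \bigm| \text{multiplier}\bigr] \le \NoNull$ --- this is exactly Storey--Taylor--Siegmund's lemma that $\falsePositive(t)/t$ run backwards from $t=q$ (more precisely $t = q^-$ or $t=1$) is a backward martingale / supermartingale in the null p-values, combined with optional stopping at $\hat\tau$. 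The one subtlety is that the relevant "time" here is the BH threshold moving \emph{downward} from $q$, so this is the martingale argument on the \emph{decreasing} filtration $\calH_t$ restricted to $p_i < q$, which is independent of $\calG_{\hat\lambda}$. This yields
\begin{align*}
    \Expect\!\left[\frac{\falsePositive(\hat\tau)}{\hat\tau}\;\Bigm|\;\calG_{\hat\lambda}\right] \le \NoNull .
\end{align*}

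It then remains to bound the outer expectation
\begin{align*}
    \Expect\!\left[\frac{\NoNull(1-\hat\lambda)}{1 + \sum_{i}\1_{\{p_i \ge \hat\lambda\}}}\right] \le 1,
\end{align*}
and this is where the increasing filtration and optional stopping enter. Define
\begin{align*}
    M_t := \frac{1 - t}{1 + \sum_{i : H_i = 0}\1_{\{p_i \ge t\}}}, \qquad q \le t \le 1,
\end{align*}
using only the \emph{null} counts. I claim $M_t$ is a supermartingale with respect to $\calG_t$ and that $\Expect[M_q] \le 1/\NoNull$ --- indeed at $t=q$, $\Expect[(1-q)/(1+\text{Binomial}(\NoNull, 1-q))] \le 1/\NoNull$ by the standard $\Expect[1/(1+\Bin(m,p))]\le 1/((m+1)p)$ identity. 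The supermartingale property is the ``running the empirical process of exceedances as $t$ increases'' computation: as $t$ grows, $\sum_i \1_{\{p_i\ge t\}}$ decreases by jumps at the order statistics of the null p-values $\ge q$, and one checks the drift of $M_t$ is nonpositive; this is the increasing-$t$ analogue of the classical backward-martingale used by Storey. Since $\sum_{i : H_i = 0}\1_{\{p_i \ge t\}} \le \sum_{i}\1_{\{p_i \ge t\}}$ pointwise, we have $M_{\hat\lambda} \le (1-\hat\lambda)/(1+\sum_i \1_{\{p_i\ge\hat\lambda\}})$ reversed in the right direction --- careful: we need an \emph{upper} bound on $(1-\hat\lambda)/(1+\sum_i\1_{\{p_i\ge\hat\lambda\}})$, so I would instead work directly with $\tilde M_t := (1-t)/(1+\sum_{i}\1_{\{p_i\ge t\}})$, which is $\calG_t$-measurable and is itself a supermartingale (the extra non-null exceedances only help), with $\Expect[\tilde M_q]\le 1/\NoNull$ by a Binomial comparison since the null exceedance count stochastically lower-bounds the total. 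Then optional stopping gives $\Expect[\tilde M_{\hat\lambda}] \le \Expect[\tilde M_q] \le 1/\NoNull$, i.e.\ $\Expect[\NoNull\tilde M_{\hat\lambda}] \le 1$, which combined with the conditional bound above closes the argument. Boundedness of $\tilde M_t$ on $[q,1]$ (it is between $0$ and $1/(1+0) = 1$) makes optional stopping immediate, with the endpoint $t=1$ handled by the convention/continuity.

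The main obstacle I anticipate is verifying the supermartingale property of $\tilde M_t$ with respect to $\calG_t$ and pinning down exactly which quantity to track: one must be careful that the filtration records only the exceedance counts above $q$ (not the non-null p-values below $q$, which are used in Step 2), that the ``$+1$'' in the denominator is what makes the Binomial expectation work out to an inequality rather than requiring equality, and that mixing the two stages --- conditioning on $\calG_{\hat\lambda}$ for the inner bound and then applying optional stopping for the outer bound --- is legitimate, which relies on the p-values below $q$ being conditionally independent of $\calG_{\hat\lambda}$ given the non-null configuration. A secondary technical point is the step-down nature of the AS estimator versus the step-up BH threshold in Step 2; one should confirm that $\hat\lambda$ is genuinely a $\calG_t$-stopping time and that $\hat\pi_0^{\hat\lambda}$ is $\calG_{\hat\lambda}$-measurable, so that Step 2's randomness is cleanly separated.
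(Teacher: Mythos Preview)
Your overall structure---a forward supermartingale for the Storey stage and a backward martingale for the BH stage---is different from the paper's argument, which pairs the same forward supermartingale $M_t = (1-t)/(1+\sum_{H_i=0}\1_{\{p_i\ge t\}})$ with a \emph{leave-one-out} bound on each null's contribution $\EE[V_i/(R\vee 1)]$ rather than a conditional bound on $\EE[V(\hat\tau)/\hat\tau]$. The paper's route avoids the coupling issue you run into below.

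There is a genuine gap in your inner bound. You claim $\EE[V(\hat\tau)/\hat\tau \mid \calG_{\hat\lambda}] \le \NoNull$, but $\calG_{\hat\lambda}\supseteq\calG_q$ already reveals the number $k$ of null p-values below $q$, and conditionally those $k$ nulls are i.i.d.\ $U(0,q)$, not $U(0,1)$. The backward martingale then gives $\EE[V(\hat\tau)/\hat\tau \mid \calG_{\hat\lambda}]=k/q$, which is \emph{not} bounded by $\NoNull$ almost surely (take $k=\NoNull$, for instance). So your factorization
\[
\EE\!\left[\frac{V(\hat\tau)}{\hat\tau}\cdot\tilde M_{\hat\lambda}\right]
\;\le\;\NoNull\cdot\EE[\tilde M_{\hat\lambda}]
\]
does not follow. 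A separate, smaller issue: $\tilde M_t=(1-t)/(1+\sum_i\1_{\{p_i\ge t\}})$ is \emph{not} a supermartingale with respect to $\calG_t$ once non-null p-values lie in $(q,1)$ (a non-null p-value at $0.5$ makes $\tilde M_t$ jump up deterministically at $t=0.5$). You also had the inequality $\tilde M_t\le M_t$ the right way around the first time; that is exactly what you need, and the detour to claiming $\tilde M_t$ is itself a supermartingale is both wrong and unnecessary.

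Your approach can be repaired: carry the random $k/q$ through, use $\tilde M_{\hat\lambda}\le M_{\hat\lambda}$, note that $k$ is $\calF_q$-measurable so $(k/q)M_t$ is still a supermartingale for $t\ge q$, apply optional stopping, and then compute directly that $\EE[(k/q)M_q]=1-q^{\NoNull}\le 1$ (write $k=\NoNull-j$ with $j\sim\text{Bin}(\NoNull,1-q)$ and use the exact formula for $\EE[1/(1+j)]$). This closes the argument along your lines, but it is more delicate than you indicated. The paper's leave-one-out sidesteps all of this: by removing one null $p_i$ at a time and observing that $\hat\pi_0^{\hat\lambda}$ does not change when $p_i<q$, one gets $\EE[V_i/(R\vee 1)]\le (q/\No)\EE[1/\hat\pi_0^{\hat\lambda,-i}]$, and the supermartingale bound on the remaining $\NoNull-1$ nulls gives $\EE[1/\hat\pi_0^{\hat\lambda,-i}]\le \No/\NoNull$ directly.
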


We provide an outline of the proof here and include the details in the appendix.
Define an extended increasing filtration $\calF_{t} := \sigma\left( \sum_{H_i = 0} \1_{\{p_i \ge s\}}, \sum_{i \in [n]} \1_{\{p_i \ge s\}}: q \le s \le t\right)$, $q \le t \le 1$, which satisfies $\calG_t \subseteq \calF_t$.
Any stopping time with regard to $\calG_t$ is also a stopping time regarding $\calF_t$.
We define a random process
\begin{align}\label{eq:martingale} 
    M_t := \frac{1 - t}{1 + \sum_{H_i = 0} \1_{\{p_i \ge t\}}} \ge \frac{1 - t}{1 + \sum_{i=1}^\No \1_{\{p_i \ge t\}}} = \frac{1}{\No \hat{\pi}_0^{\hat{\lambda}}}, \quad q \le t \le 1.
\end{align}
A key observation is that $M_t$ is a non-negative super-martingale with respect to $\calF_{t}$. 
Then by the optional stopping time theorem,
\begin{align}\label{eq:expectation.inverse}
    \EE\left[\frac{1}{\hat{\pi}_0^{\hat{\lambda}}} \right]
    \le \EE\left[\No M_{\hat{\lambda}}\right]
     \le \EE\left[\No M_{q} \right]
     \le \frac{\No}{1 + \NoNull},
\end{align}
where the last inequality comes from a standard result of binomial distribution.
Finally, we combine \eqref{eq:expectation.inverse} and a leave-one-out argument adapted from \cite{benjamini2014selective} to prove the finite-sample FDR control of \Cref{algo:BH.adaptive.Storey}.

\begin{remark}\label{rmk:stochastic.dominance}
    \Cref{prop:BH.adaptive.Storey} holds for non-uniform null p-values whose marginal distribution satisfies the conditional stochastic dominance property
    \begin{align}\label{eq:conditional.stochastic.dominance}
        \PP(p_i \ge t \mid p_i \ge s) \ge (1-t)/(1-s), \quad q \le s \le t \le 1.
    \end{align}
    The conditional stochastic dominance property is met when the null p-values admit an increasing density on $[q,1]$, such as Beta$(a, b)$ with $a \ge 1$, $b \le 1$, and the p-values of one-sided hypothesis testing based on a distribution family with monotone likelihood ratio.
    See \Cref{sec:simulations} for concrete examples.
\end{remark}

\subsubsection{Dependent p-values based on exchangeable test statistics}\label{sec:RANC}

As discussed in \cite{benjamini2006adaptive, blanchard2008two}, the BH procedure with Storey's null proportion estimator does not control FDR in general.
Nevertheless, we prove the FDR control extends to a class of dependent p-values considered by \cite{bates21_testin_outlier_with_confor_p_values, mary22_semi_super_multip_testin, gao2023simultaneous}.
This type of p-values, also known as conformal p-values, are derived from exchangeable test statistics described below.
Their validity is robust to model misspecifications, small sample sizes, batch effects, and only requires an exchangeability condition that could be ensured by data generation or domain knowledge.

We describe this type of p-values following the notations in \cite{gao2023simultaneous}.
Suppose there are $\No + \NoNc$ hypotheses. 
The first $\No$ hypotheses $\calI := \{1,\dotsc,\No\}$ are under investigation, and the last $\NoNc$ hypotheses $\{\No+1,\dotsc,\No+m\}$ are true according to domain knowledge and referred to as negative controls.  We use $\calI_0 \subseteq \calI$ to denote the unknown null hypotheses under investigation.
Each hypothesis $\hypothesis{i}$ is
associated with a test statistic $\testStatistics{i}$, where a larger test statistic provides more evidence against that hypothesis. 
The p-value is defined as the
normalized rank of $\testStatistics{i}$ among $\testStatistics{i}$, $\testStatistics{\No +1}$, $\ldots$, $\testStatistics{\No + \NoNc}$,
\begin{align}\label{eq:pval}
  p_i := \frac{1 + \sum_{j \in \hypothesisIndex{\text{nc}}}
  \1_{\{\testStatistics{j} \ge \testStatistics{i}\}}}{1 + \NoNc}.
\end{align}
Because $p_i$, $i \in [n]$ are calculated using
the same set of negative controls, they are not independent even if the test statistics $\testStatistics{i}$, $i \in [\No +m]$ are themselves independent. 
Nevertheless, the BH procedure combined with $p_i$ with independent or partially exchangeable (specified below) test statistics has been proved to control FDR \parencite{bates21_testin_outlier_with_confor_p_values, mary22_semi_super_multip_testin, gao2023simultaneous}.
In this paper, we extend this FDR control result to the BH procedure with the AS null proportion estimator.

\begin{proposition}\label{prop:BH.adaptive.Storey.RANC}
    Let $\hat{\lambda}$ be a stopping time with regard to $\calG_t$.
    Assume the partial exchangeability:
    $\testStatistics{i}$, $i \in \calI_{0} \cup \calI_{\text{nc}}$ is exchangeable given $\testStatistics{j}$, $j \in \calI/\calI_{0}$.  Then \Cref{algo:BH.adaptive.Storey} applied to $(p_i)_{i \in \hypothesisIndex{}}$ defined in Eq.~\eqref{eq:pval} controls the FDR at the target level.
\end{proposition}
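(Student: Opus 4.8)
The plan is to run the proof of \Cref{prop:BH.adaptive.Storey} almost verbatim, with the i.i.d.\ uniform structure of the null p-values — and in particular the binomial identity behind \eqref{eq:martingale}--\eqref{eq:expectation.inverse} — replaced by the finite-population structure forced by partial exchangeability. The FDR decomposition $\FDR=\sum_{i\in\calI_0}\EE[\1_{\{p_i\le\hat\tau\}}/(R\vee 1)]$, the self-consistency bound $R\ge\hat\pi_0^{\hat\lambda}\No\hat\tau/q$ on $\{\hat\tau\ge p_i\}$ (so that $\1_{\{p_i\le\hat\tau\}}/(R\vee 1)\le q\,\1_{\{p_i\le\hat\tau\}}/(\hat\pi_0^{\hat\lambda}\No\hat\tau)$), and the martingale-plus-leave-one-out skeleton all carry over unchanged.

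The first step is a conditioning argument. Let $\calE$ be the $\sigma$-algebra generated by the non-null test statistics $(\testStatistics{j})_{j\in\calI\setminus\calI_0}$ together with the unordered multiset $\{\testStatistics{i}:i\in\calI_0\cup\hypothesisIndex{\text{nc}}\}$. By partial exchangeability, conditional on $\calE$ the negative-control statistics form a uniformly random $m$-subset of that multiset and the null statistics are the complementary $\NoNull$ values; since every p-value in \eqref{eq:pval} depends only on its own statistic and on the negative controls, this subset is the only randomness that survives the conditioning, and all p-values take values on the finite grid $\{1/(m+1),\dots,1\}$. Consequently every process below is a finite-horizon discrete-time object and optional stopping is elementary. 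I would then define the conformal analogue of the extended filtration, $\calF_t:=\sigma\big(\sum_{i\in\calI_0}\1_{\{p_i\ge s\}},\ \sum_{i\in[\No]}\1_{\{p_i\ge s\}}:q\le s\le t\big)$, which again satisfies $\calG_t\subseteq\calF_t$, so that the $\calG_t$-stopping time $\hat\lambda$ is also an $\calF_t$-stopping time.

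The crux is to show that, conditionally on $\calE$, the process $M_t=(1-t)\big/\big(1+\sum_{i\in\calI_0}\1_{\{p_i\ge t\}}\big)$ — with the null count now referring to the conformal p-values, read at the grid points $q\le t\le 1$ — is a non-negative super-martingale for $\calF_t$. This is the finite-population counterpart of the super-martingale used in \eqref{eq:martingale}: as $t$ increases, the continuous decrease of $1-t$ must compensate the occasional upward jumps produced when a block of null p-values is revealed, and checking this reduces to a hypergeometric identity in place of the binomial one invoked right after \eqref{eq:expectation.inverse} (informally, each null conformal p-value is conditionally super-uniform on the grid because its test statistic is an exchangeable draw among itself and the $m$ negative controls). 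Granting this, $M_t\ge 1/(\No\hat\pi_0^{\hat\lambda})$ exactly as in \eqref{eq:martingale}, and optional stopping at $\hat\lambda$ gives the conditional form of \eqref{eq:expectation.inverse}, $\EE[1/\hat\pi_0^{\hat\lambda}\mid\calE]\le\No\,\EE[M_{\hat\lambda}\mid\calE]\le\No\,\EE[M_q\mid\calE]\le\No/(1+\NoNull)$; the same estimate with one null removed yields $\EE[1/\hat\pi_0^{\hat\lambda,(i)}\mid\calE]\le\No/\NoNull$, where $\hat\pi_0^{\hat\lambda,(i)}$ denotes the AS estimator computed without $p_i$.

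The last step is the leave-one-out of \cite{benjamini2014selective}. Fix a null $i\in\calI_0$; on $\{p_i\le\hat\tau\}$ one automatically has $p_i<q$, and since $\hat\lambda$ and $\hat\pi_0^{\hat\lambda}$ are measurable with respect to the p-values that are $\ge q$, neither depends on $p_i$ on that event, so the only channel through which $p_i$ still enters is the step-up threshold $\hat\tau$ — handled exactly as in classical BH (replacing $p_i$ by the minimal grid value freezes $\hat\tau$ and can only enlarge it). Combining this with the conditional super-uniformity of the null conformal p-values, valid given $\calE$ by the same exchangeable-urn structure (cf.\ \cite{bates21_testin_outlier_with_confor_p_values,gao2023simultaneous}), gives $\EE[\1_{\{p_i\le\hat\tau\}}/(\hat\pi_0^{\hat\lambda}\No\hat\tau)\mid\calE]\le\EE[1/(\No\,\hat\pi_0^{\hat\lambda,(i)})\mid\calE]$; multiplying by $q$, summing over $i\in\calI_0$, feeding in the bound $\EE[1/\hat\pi_0^{\hat\lambda,(i)}\mid\calE]\le\No/\NoNull$, and taking expectations over $\calE$ yields $\FDR\le q$. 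I expect the main obstacle to be precisely the conditional super-martingale property of $M_t$ under the exchangeable model — pinning down the grid bookkeeping and the finite-population identity that stands in for the binomial one — together with verifying that the leave-one-out step threads through cleanly on the grid; the remainder is a faithful transcription of the proof of \Cref{prop:BH.adaptive.Storey}.
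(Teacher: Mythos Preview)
Your skeleton --- super-martingale for $M_t$, optional stopping, then a leave-one-out --- is exactly the paper's, and you have correctly located the first hard step: the paper verifies the super-martingale property on the grid $t_{\NoNc-k}=1-k/(\NoNc+1)$ by an explicit combinatorial identity (the rank of the next negative-control order statistic among the surviving nulls has probability $\binom{K+k-r}{k-1}/\binom{K+k}{k}$), which is the finite-population replacement for the binomial bound you allude to.

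The genuine gap is in the leave-one-out. Your argument treats $p_i$ as if it could be perturbed while holding the remaining $(p_j)_{j\neq i}$ fixed, but for conformal p-values this fails: ``replacing $p_i$ by the minimal grid value'' means declaring $T_i$ to be the largest element of the set $\{T_{\calI_{\text{nc}}\cup\{i\}}\}$, which simultaneously changes the negative-control set and hence shifts \emph{every} $p_j$. In particular the claim that $\hat\lambda$ and $\hat\pi_0^{\hat\lambda}$ are unaffected on $\{p_i<q\}$, and that $\hat\tau$ only increases, does not follow from ``$\hat\pi_0^{\hat\lambda}$ is a function of $\{p_j:p_j\ge q\}$'' alone. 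Your conditioning on $\calE$ (non-nulls plus the unordered multiset of all nulls and negative controls) makes this worse, not better: under $\calE$ even the other \emph{null} test statistics are still being randomly reassigned, so the leave-$i$-out estimator $\hat\pi_0^{\hat\lambda,(i)}$ is not well defined as a function of $\calE$ and a single held-out coordinate.

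The paper repairs this by conditioning on a finer $\sigma$-algebra, namely $T_{\calI\setminus\{i\}}$ together with the unordered set $\{T_{\calI_{\text{nc}}\cup\{i\}}\}$, so that the only remaining randomness is the rank $k$ of $T_i$ within that $(m+1)$-set (uniform by partial exchangeability). It then proves a stability lemma: as $k$ varies, all $p_j^{-i}(k)$ are monotone in $k$, and on the event $\{V_i=1\}$ one has $R^{-i}(k)=R^{-i}(0)$ and $\hat\pi_0^{\hat\lambda,-i}(k)=\hat\pi_0^{\hat\lambda,-i}(0)$; moreover $V_i=1$ iff $k\le\kappa_i$ for an explicit $\kappa_i$ measurable with respect to the conditioning. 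This is what lets the super-uniformity of $p_i$ be cashed in against a threshold and a null-proportion estimator that are genuinely frozen. You should replace the appeal to ``classical BH leave-one-out'' by this stability argument.
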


To prove \Cref{prop:BH.adaptive.Storey.RANC}, we first employ combinatorial equations to show that $M_t$ in \eqref{eq:martingale} based on the p-values in~\eqref{eq:pval} remains a super-martingale under the partial exchangeability condition.
We then provide a leave-one-out argument for the FDR control of the adaptive BH procedure.
This argument is distinct from existing proofs for this type of p-values relying on martingales \parencite{mary22_semi_super_multip_testin, gao2023simultaneous} or the PRDS property \parencite{bates21_testin_outlier_with_confor_p_values, gao2023simultaneous}

\subsection{A stopping time maximizing the number of rejections}\label{sec:stopping.time}

The general AS method accommodates arbitrary stopping time.
We provide a concrete configuration with the goal to maximize the number of rejections, or equivalently minimize $\hat{\pi}_0^{\hat{\lambda}}$.

Let $\lambda_j := q + j\delta$, $j \in [(1-q)/\delta]$ for some $\delta > 0$, we consider the stopping time
\begin{align}\label{eq:stopping.time}
    \hat{\lambda} = \inf\left\{\lambda_j, j \in [(1-q)/\delta]: \hat{\pi}_0^{\lambda_{j-1}} \le \hat{\pi}_0^{\lambda_j}\right\}.  
\end{align}
The stopping time compares the estimated null proportions of adjacent hyper-parameters $\lambda_{j-1}$, $\lambda_j$, and terminates once the estimated value stops to decrease, i.e., $\hat{\pi}_0^{\lambda_{j-1}} \le \hat{\pi}_0^{\lambda_j}$.
In \Cref{fig:protein.hist.intro} (b) derived from the motivating protein dataset, the estimator $\hat{\pi}_0^{\lambda}$ first decreases then increases in $\lambda$, and the proposed stopping time is likely to stop around the global minimizer of $\hat{\pi}_0^{\lambda}$.
Regarding the selection of $\delta$, a small $\delta$ is associated with $\lambda_j$ of higher resolution, but it may lead to early termination of the descending trend of $\ell$ due to the variability in $\hat{\pi}_0^{\lambda}$. 
In \Cref{prop:convergence.rate}, we analyze the asymptotic performance of $\hat{\lambda}$ considering a $\delta$ vanishing at the rate $O(\No^{-1/3})$.
In practice, we suggest using $\delta = 50/\sum_{i=1}^{\No} \1_{\{p_i \ge q\}}$ for $\No \sim 10^3$ as a rule of thumb to balance the low resolution and the premature termination.


\begin{figure}[tbp]
        \centering
        \begin{minipage}{0.33\textwidth}
                \centering
                \includegraphics[clip, trim = 0cm 0cm 0cm 0cm, height = 5cm]{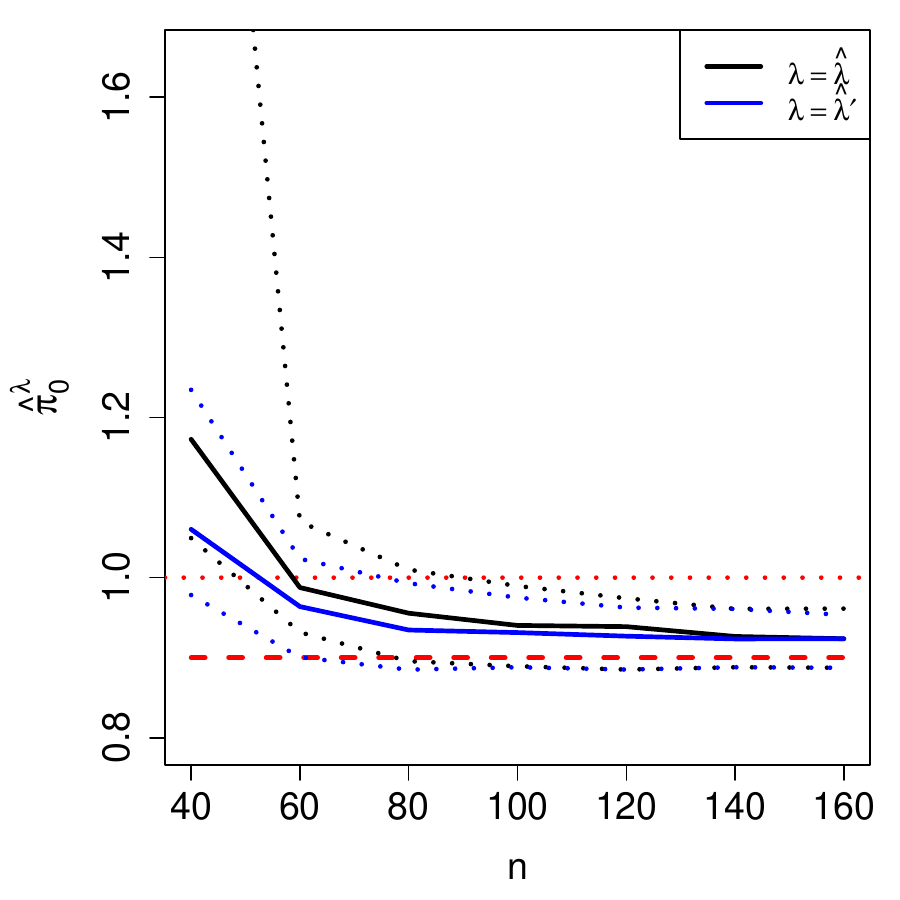}
        \subcaption{Few strong signals}
        \end{minipage}
        \begin{minipage}{0.32\textwidth}
                \centering
                \includegraphics[clip, trim = 0cm 0cm 0cm 0cm, height = 5cm]{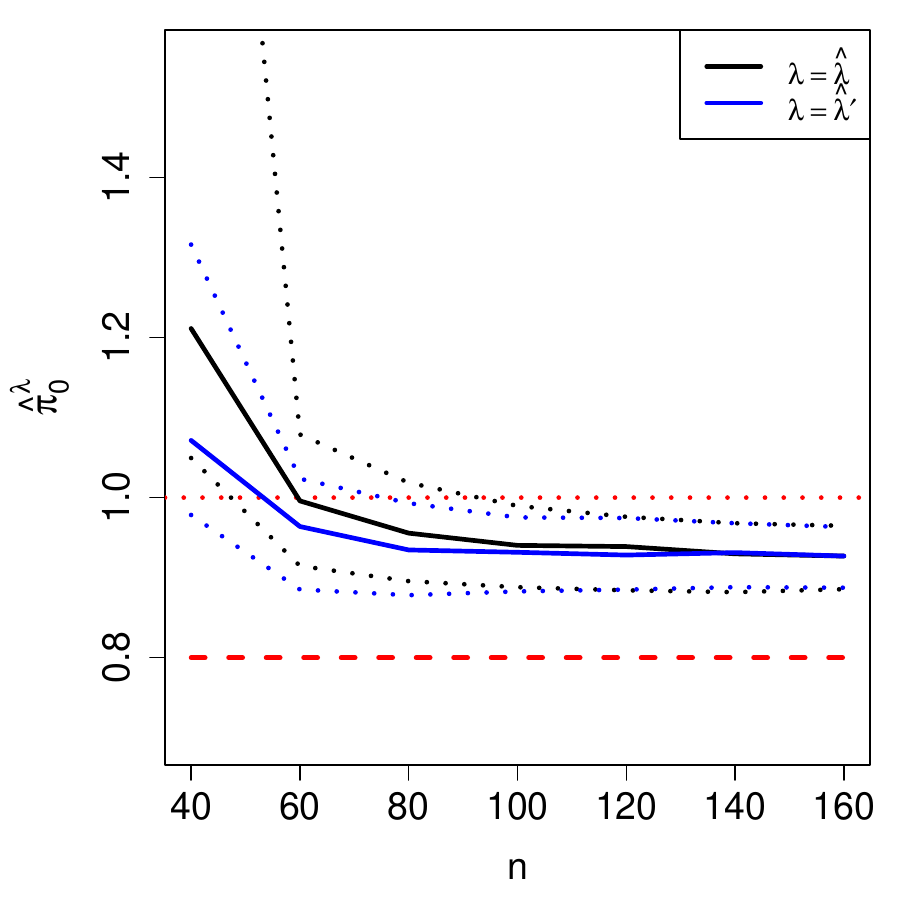}
        \subcaption{Many weak signals}
        \end{minipage}
        \begin{minipage}{0.32\textwidth}
                \centering
                \includegraphics[clip, trim = 0cm 0cm 0cm 0cm, height = 5cm]{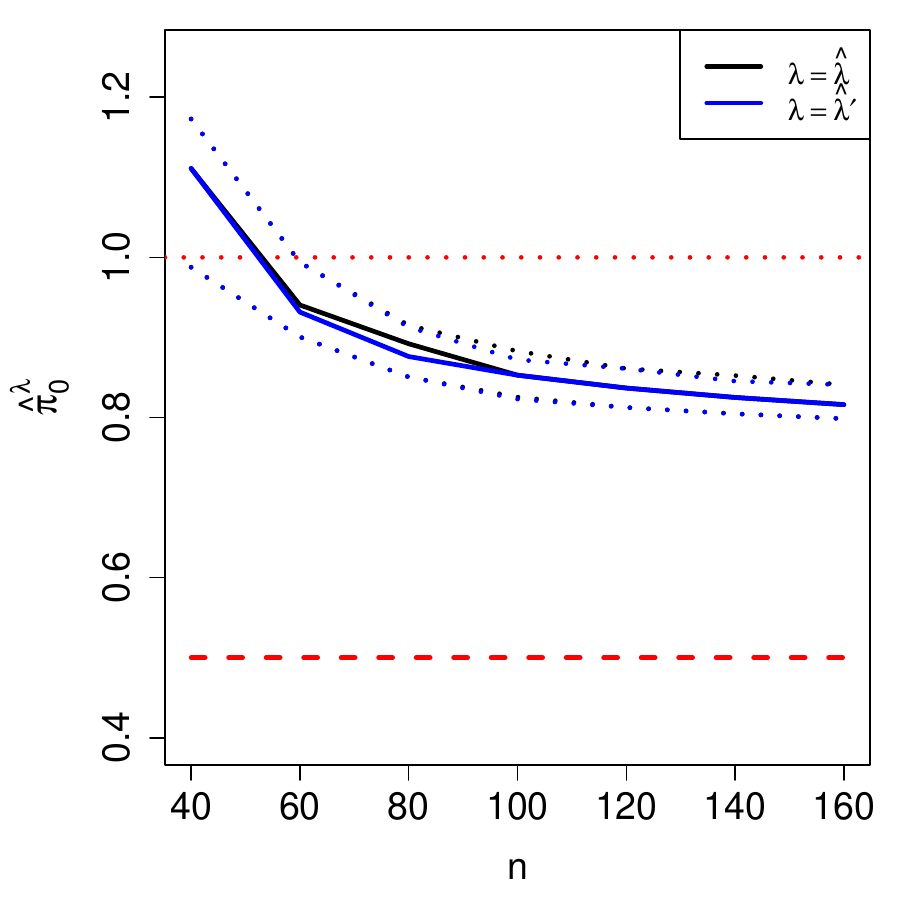}
        \subcaption{Conservative nulls}
        \end{minipage}
            \caption{Comparison of $\hat{\lambda}$ and its robust version $\hat{\lambda}'$. We consider the three scenarios in \Cref{sec:simulations} and vary the total sample size from $40$ to $160$. We choose $\delta = 10/\sum_{i=1}^{\No} \1_{\{p_i \ge q\}}$. Each scenario is repeated $4000$ times. We plot the median (solid), upper and lower quartiles (dotted) of Storey's estimator $\hat{\pi}_0^{\hat{\lambda}}$ (black), $\hat{\pi}_0^{\hat{\lambda}'}$ (blue). 
            We also superimpose the horizontal reference lines at the true $\pi_0$ (red, dashed) and $1$ (red, dotted). }
        \label{fig:lossCurvePi0}
\end{figure}

Next, we discuss a robust version of $\hat{\lambda}$ suitable for small sample size. Define a loss that takes the variation of $\hat{\pi}_0^{\lambda}$ into account,
\begin{align*}
    \ell({\lambda})
    := \hat{\pi}_0^{\lambda} + \sqrt{\hat{V}^{\lambda}}, \quad
    \hat{V}^{\lambda} := \frac{1}{n} \cdot \hat{\pi}_0^{\lambda} \left(\frac{1}{1 - \lambda} - \hat{\pi}_0^{\lambda} \right).
\end{align*}
Here $\hat{V}^{\lambda}$ estimates the variance of $\hat{\pi}_0^{\lambda}$ by plugging $\hat{\pi}_0^{\lambda}$ into the variance formula of Binomial distribution.
Close-to-one $\lambda$ is often associated with a large loss due to the standard deviation term.
We define a robust version of the stopping time~\eqref{eq:stopping.time} based on the loss $\ell({\lambda})$,
\begin{align}\label{eq:stopping.time.robust}
    \hat{\lambda}' = \inf\left\{\lambda_j, j \in [(1-q)/\delta]: \ell({\lambda}_{j-1}) \le \ell({\lambda}_j) \right\}.  
\end{align}
\Cref{fig:lossCurvePi0} displays that the estimated null proportion with the robust version of the stopping time could be more reliable when the sample size is small ($\No \le 100$).


Below we provide the asymptotic performance analysis of the proposed stopping times under a frequently-used mixture model \parencite{storey2004strong, efron2009empirical}. 
Suppose $(\hypothesis{i})_{i \in [\No]} \overset{\text{i.i.d.}}{\sim}
\text{Bernoulli}(1 - \probNull)$.
Given the hypothesis $\hypothesis{i}$, the
p-value is generated independently as
\begin{equation*}
  \label{eq:two-mixture}
  p_i \mid \hypothesis{i} = 0 \sim
  \cdfTestStatisticsNull \quad \text{and} \quad p_i
  \mid \hypothesis{i} = 1 \sim \cdfTestStatisticsNonNull.
\end{equation*}
Here $F_0$, $F_1$ denote the CDF of the unknown null\footnote{Here we allow an unknown null distribution that conditionally stochastically dominates $U(0,1)$ as described in \Cref{rmk:stochastic.dominance}.} and non-null distributions. 
The marginal CDF of $p_i$ is the mixture
$\cdfTestStatistics{} = \probNull \cdfTestStatisticsNull +
(1-\probNull) \cdfTestStatisticsNonNull$.
We denote the density functions
of $\cdfTestStatisticsNull$ and
$\cdfTestStatisticsNonNull$ by $f_0$, $f_1$, and the marginal density by $f = \pi_0 f_0 + (1-\pi_0) f_1$.
Let $\pi^{\lambda}_{0,\infty} = (1 - F(\lambda))/(1-\lambda)$, $0 \le \lambda < 1$, and $\pi^{1}_{0,\infty} = \overline{\lim}_{t \to 1} \pi^{t}_{0,\infty}$, which can be considered as the null proportion estimator with the hyper-parameter $\lambda$ obtained from an infinite number of hypotheses drawn from the mixture model.
Define $\pi_0^* := \inf_{q \le \lambda \le 1} \pi^{\lambda}_{0,\infty} \ge \pi_0$, and the optimal hyper-parameter $\lambda^* := \inf\{q \le \lambda \le 1: \pi^{\lambda}_{0,\infty} = \pi^*_0\}$. 
It is possible that the optimal hyper-parameter $\lambda^* < 1$, typically in the presence of conservative nulls as in the motivating protein dataset.

Under the mixture model, Storey's estimator with a fixed $\lambda < 1$ is shown to converge at the rate $\No^{-1/2}$ to a Gaussian distribution with the mean $\pi_{0,\infty}^{\lambda} \ge \pi_0^*$ \parencite{genovese_stochastic_2004}.
\cite{neuvial2013asymptotic} shows Storey's estimator with a hyper-parameter growing to $1$ converges at a rate slower than $\No^{-1/2}$.
In contrast, our analysis below deals with data-driven hyper-parameters.

Let $\hat{\lambda}_{\No}$, $\hat{\lambda}_{\No}'$ be the stopping time defined in \eqref{eq:stopping.time}, \eqref{eq:stopping.time.robust} obtained from a sample of size $\No$. The following proposition characterizes the convergence of $\hat{\lambda}_{\No}$, $\hat{\lambda}_{\No}'$ to $\lambda^*$.

\begin{proposition}\label{prop:convergence.rate}
    Assume
    $f$, $f'$, $f''$ exist, are continuous on $[0,1]$, and $f'' > 0$ on $[0,1)$.
    \begin{itemize}
        \item [(a)] If $\lambda^* = 1$, then for $\delta = \No^{-1/3} \log(\No)$, \begin{align*}
        \PP\left(\left|\hat{\lambda}_{\No} - \lambda^*\right| \ge \varepsilon \right),  
        ~\PP\left(\hat{\pi}_0^{\hat{\lambda}_{\No}} - {\pi}_0^* \ge \varepsilon\right) \to 0.
    \end{align*}
        \item [(b)]  If $\lambda^* < 1$, then for any $M > 0$, there exists $N > 0$ such that setting $\delta = M \No^{-1/3}$, with probability at least $1 - C_1 M^{-3/2}$, 
    \begin{align*}
        \left| \hat{\lambda}_{\No} - \lambda^* \right| \le C_2 M {\No}^{-1/3}, \quad 
        \hat{\pi}_0^{\hat{\lambda}_{\No}} \le \pi_0^* + C_2 M {\No}^{-1/3}, \quad \forall \No \ge N,
    \end{align*}
    where $C_1$, $C_2 > 0$ are independent of $\No$ and $M$. 
    \end{itemize}
    Both (a) and (b) apply to $\hat{\lambda}_{\No}'$.
\end{proposition}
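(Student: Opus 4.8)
The plan is to isolate a deterministic study of the population curve $g(\lambda):=\pi^\lambda_{0,\infty}=(1-F(\lambda))/(1-\lambda)$, then analyze the grid stopping rule as a noisy random walk whose drift is governed by $g'$, and finally glue the two with a union bound over $\{\lambda_j\}$.

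\textbf{Step 1: the population curve.} From $g(\lambda)=\frac{1}{1-\lambda}\int_\lambda^1 f$ one gets $g'(\lambda)=(g(\lambda)-f(\lambda))/(1-\lambda)$, so $\sign g'=\sign\phi$ with $\phi(\lambda):=(1-F(\lambda))-(1-\lambda)f(\lambda)$, which satisfies $\phi(1)=0$ and $\phi'(\lambda)=-(1-\lambda)f'(\lambda)$. Since $f''>0$ makes $f'$ strictly increasing (hence with at most one zero), $\phi$ is unimodal, and with $\phi(1)=0$ this forces $g'$ to change sign at most once, from $-$ to $+$; thus $g$ has a unique minimizer $\lambda^*\in[q,1]$. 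If $\lambda^*\in(q,1)$ then $g(\lambda^*)=f(\lambda^*)$, which rules out $f'(\lambda^*)\ge0$ (otherwise the average of $f$ over $[\lambda^*,1]$ would strictly exceed $f(\lambda^*)$), so $f'(\lambda^*)<0$, $g''(\lambda^*)=-f'(\lambda^*)/(1-\lambda^*)>0$, and $g(\lambda)-\pi_0^*\asymp(\lambda-\lambda^*)^2$ near $\lambda^*$ (the boundary case $\lambda^*=q$ is analogous with a one-sided linear estimate). If $\lambda^*=1$ then $f'\le0$ on $[q,1]$, $g$ is strictly decreasing, $g'\le-c_\varepsilon<0$ on every $[q,1-\varepsilon]$, and $g(\lambda)-\pi_0^*=O(1-\lambda)$.

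\textbf{Step 2: the stopping rule as a random walk with drift.} Let $D_j:=\#\{i:\lambda_{j-1}\le p_i<\lambda_j\}$ and $S(\lambda):=\sum_i\1_{\{p_i\ge\lambda\}}$. Using $1+S(\lambda_j)=1+S(\lambda_{j-1})-D_j$ one obtains
\[ \hat\pi_0^{\lambda_j}-\hat\pi_0^{\lambda_{j-1}}=\frac{1}{\No(1-\lambda_j)}\left(\No\delta\,\hat\pi_0^{\lambda_{j-1}}-D_j\right), \]
so $\{\hat\lambda_\No=\lambda_j\}\subseteq\{D_j\le \No\delta\,\hat\pi_0^{\lambda_{j-1}}\}$ (the stopping inequality must hold at the stopping step). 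Conditionally on $\calG_{\lambda_{j-1}}$, $D_j$ is Binomial with mean $\approx \No f(\lambda_{j-1})\delta$ and variance $\asymp \No\delta$, while on the event (high probability by Bernstein plus a union bound over the grid) that $\hat\pi_0^{\lambda_{j-1}}$ lies within $O(\sqrt{\log\No/\No}/(1-\lambda_{j-1}))$ of $g(\lambda_{j-1})$, the conditional drift is $\No\delta\,\hat\pi_0^{\lambda_{j-1}}-\EE[D_j\mid\calG_{\lambda_{j-1}}]\approx \No\delta(1-\lambda_{j-1})g'(\lambda_{j-1})$ — negative before $\lambda^*$, positive after. Bernstein's inequality then gives $\PP(D_j\le \No\delta\,\hat\pi_0^{\lambda_{j-1}})\lesssim\exp(-c\,\No\delta\, g'(\lambda_{j-1})^2)$ when $g'(\lambda_{j-1})<0$, and the complementary bound when $g'(\lambda_{j-1})>0$.

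\textbf{Step 3: assembling (a), (b), and the robust version.} For (b), $\delta=M\No^{-1/3}$: by Step 1, $|g'(\lambda_{j-1})|\gtrsim|\lambda_{j-1}-\lambda^*|$, hence $\PP(\hat\lambda_\No\le\lambda^*-C_2\delta)\le\sum_{j:\lambda_j\le\lambda^*-C_2\delta}\PP(D_j\le \No\delta\hat\pi_0^{\lambda_{j-1}})\lesssim\sum_{k\ge C_2}\exp(-c\No\delta^3k^2)=\sum_{k\ge C_2}\exp(-cM^3k^2)=O(M^{-3/2})$, using $\sum_{k\ge1}e^{-ak^2}\le\tfrac12\sqrt{\pi/a}$; and $\PP(\hat\lambda_\No\ge\lambda^*+C_2\delta)\lesssim\exp(-cM^3)$ by applying the complementary bound one step past $\lambda^*+\tfrac12C_2\delta$. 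Combining with $g(\hat\lambda_\No)-\pi_0^*\lesssim|\hat\lambda_\No-\lambda^*|^2$ (linear if $\lambda^*=q$) and the uniform estimation error proves (b). For (a), $\delta=\No^{-1/3}\log\No$: on $[q,1-\varepsilon]$ one has $|g'|\ge c_\varepsilon$, so each grid term is $\lesssim\exp(-\Theta(\No\delta))=\exp(-\Theta(\No^{2/3}\log\No))$, which overwhelms the $O(\No^{1/3}/\log\No)$ grid points; thus $\hat\lambda_\No\ge1-\varepsilon$ w.h.p.\ for every $\varepsilon$, i.e.\ $\hat\lambda_\No\toprob1=\lambda^*$, and since $|\hat\pi_0^\lambda-g(\lambda)|=o(1)$ uniformly on $[q,1-\delta]$ and $g(\lambda)-\pi_0^*=O(1-\lambda)=o(1)$ there, $\hat\pi_0^{\hat\lambda_\No}\toprob\pi_0^*$. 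For $\hat\lambda_\No'$: the extra term $\sqrt{\hat V^\lambda}$ has population order $\No^{-1/2}(1-\lambda)^{-1/2}$, so on any compact $[q,1-\varepsilon]$ it and its per-step increments are $O(\No^{-1/2})$ and $O(\No^{-1/2}\delta)$, negligible against the $\asymp\delta|\lambda_{j-1}-\lambda^*|$ drift near $\lambda^*$; hence Steps 2--3 transfer verbatim for (b) and for (a) away from $1$, while in case (a) the penalty dominates only inside a $\Theta(\No^{-1/3})$-neighborhood of $1$, where it forces $\hat\lambda_\No'$ to stop, and that neighborhood still collapses to $\{1\}$, so $\hat\lambda_\No'\toprob1$ and $\hat\pi_0^{\hat\lambda_\No'}\toprob\pi_0^*$.

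\textbf{Main obstacle.} The delicate regime is $\lambda\to1$ in part (a): there $1-\lambda\to0$ inflates the variances of both $\hat\pi_0^\lambda$ and the $D_j$-statistic by a factor $(1-\lambda)^{-1}$, so the per-step concentration degrades near the endpoint; it is precisely the extra $\log\No$ in $\delta$ (making $\No\delta^3=(\log\No)^3\to\infty$) together with keeping the grid one full step short of $1$ that keeps the union bound alive while still letting $\hat\lambda_\No$ reach $1$. A secondary subtlety is that $\hat\lambda_\No$ is a first-hitting time, so one should only use the one-sided inclusion $\{\hat\lambda_\No=\lambda_j\}\subseteq\{\text{stopping inequality at }j\}$ rather than trying to control the whole path, and the rate $M^{-3/2}$ is exactly the cost of summing Gaussian-type Binomial tails $\sum_k e^{-cM^3k^2}\asymp M^{-3/2}$ over grid offsets $k$.
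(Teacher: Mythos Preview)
Your proposal is correct and tracks the paper's argument closely: both establish convexity of the population curve $g(\lambda)=\bar F(\lambda)/(1-\lambda)$ from $f''>0$, both reduce the stopping event to a Bernstein-type tail on the per-step count, and both obtain the $M^{-3/2}$ rate by summing $\sum_k e^{-cM^3k^2}\asymp M^{-3/2}$ over grid offsets. The case split ($\lambda^*\in(q,1)$, $\lambda^*=q$, $\lambda^*=1$) and the treatment of the robust variant by showing the $\sqrt{\hat V^\lambda}$ correction is lower order also mirror the paper.

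There are two repackaging differences worth noting. First, the paper derives $g''>0$ on $[0,1)$ via an integral identity (its Lemma~A.1 gives $g^{(k)}(t)=\tfrac{k!}{(1-t)^{k+1}}\int_t^1\!\cdots\!\int_t^{t_k}f^{(k)}$), which yields $|g'(\lambda)|\ge C_L|\lambda-\lambda^*|$ in one stroke; your route through $\phi(\lambda)=\bar F(\lambda)-(1-\lambda)f(\lambda)$ and the sign of $f'$ gives unimodality and $g''(\lambda^*)>0$, and you then need a compactness argument away from $\lambda^*$ to upgrade to the global linear lower bound on $|g'|$---equivalent, just less uniform. Second, the paper decomposes $(\hat\pi_0^{\lambda_{j+1}}-g(\lambda_{j+1}))-(\hat\pi_0^{\lambda_j}-g(\lambda_j))$ into two centered pieces (I) and (II) and controls each by Bernstein plus a union bound; your identity $\hat\pi_0^{\lambda_j}-\hat\pi_0^{\lambda_{j-1}}=\tfrac{1}{\No(1-\lambda_j)}(\No\delta\,\hat\pi_0^{\lambda_{j-1}}-D_j)$ is a cleaner equivalent that makes the drift $\No\delta(1-\lambda_{j-1})g'(\lambda_{j-1})$ explicit. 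For the robust stopping time, your bound on the \emph{increment} $\sqrt{\hat V^{\lambda_j}}-\sqrt{\hat V^{\lambda_{j-1}}}=O(\No^{-1/2}\delta)$ is in fact tighter than the paper's level bound on $\hat V^{\lambda_j}$ and matches the $\sqrt{\hat V}$ in the loss more directly.
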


The proof is provided in the appendix.
Examples satisfying the conditions in \Cref{prop:convergence.rate} include p-values derived from a family with monotone likelihood ratio.
If the condition is violated in the sense that $f'' = 0$ on an interval contained in $(q, \lambda^*)$, then $\hat{\lambda}_{\No}$ is likely to terminate within that interval before reaching $\lambda^*$ and is thus not consistent.

\section{Simulations}\label{sec:simulations}

We compare the non-adaptive BH (std), the oracle BH (orc) with true $\probNull$, and the adaptive BH with three types of null proportion estimators: BY for \cite{benjamini2006adaptive}, S for Storey's estimator, and AS for the proposed extension.
For Storey's estimator, we consider three variants: S.S (S for small)  with $\lambda = 0.2$, S.M (M for medium) with $\lambda = 0.5$, S.L (L for large) with $\lambda = 0.8$.
In the AS method, we adopt a modified version of the stopping time~\eqref{eq:stopping.time.robust} by truncating it at $0.8$,
\begin{align*}
    \hat{\lambda} = \inf\left\{\lambda_j, j \in [(1-q)/\delta]: \lambda_j \le 0.8, ~{\ell}({\lambda}_{j-1}) \le {\ell}({\lambda}_{j})\right\}, 
\end{align*}
and the suggested rule of thumb $\delta = 50/\sum_{i=1}^{\No} \1_{\{p_i \ge q\}}$.
The FDR level is set at $q = 0.2$.

We consider four data generating mechanisms.
\begin{enumerate}
    \item [(a)] Few strong signals. We set $\No = 500$ with $\NoNull = 450$. The non-null p-values are generated independently as $1 - \Phi(Z + 2)$, where $Z \sim \calN(0,1)$ and $\Phi$ denotes the CDF of standard normal; 
    \item [(b)] Distributed weak signals. We consider $\No = 500$, $\NoNull = 100$, and non-null p-values $1 - \Phi(Z + 1.5 \mu_i)$ for $\mu_i = i/400$, $i \in [400]$; 
    \item [(c)] Screened using external data. We adopt $\No = 10^4$ with $\NoNull = 9 \times 10^3$. We generate two independent batches of p-values with non-null p-values following $1 - \Phi(Z + 1.5)$ and $1 - \Phi(Z + 1)$, respectively. We use the first batch to select the hypotheses with p-values no larger than $0.05$. We then apply BH to the p-values of the selected hypotheses in the second batch, whose null proportion is expected to be low due to the screening. This setting is motivated by the cross-screening procedure in \cite{bogomolov2018assessing};
    \item [(d)] Conservative nulls. We consider $\No = 500$, $\NoNull = 250$, and null p-values following Beta$(3, 1)$, non-null p-values $1 - \Phi(Z + 2)$. We remark that \Cref{prop:BH.adaptive.Storey} holds for this null distribution. 
\end{enumerate}

\begin{figure}[tbp]
        \centering
        \begin{minipage}{0.245\textwidth}
                \centering
                \includegraphics[clip, trim = 0cm 0cm 0cm 0cm, height = 16cm]{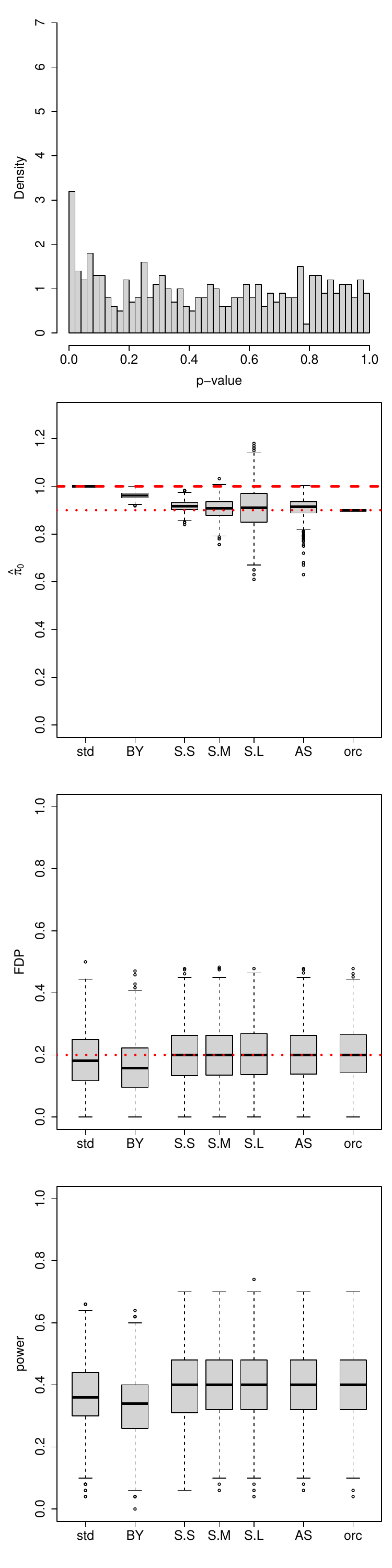}
        \subcaption{Few strong signals}
        \end{minipage}
        \begin{minipage}{0.21\textwidth}
                \centering
                \includegraphics[clip, trim = 1.5cm 0cm 0cm 0cm, height = 16cm]{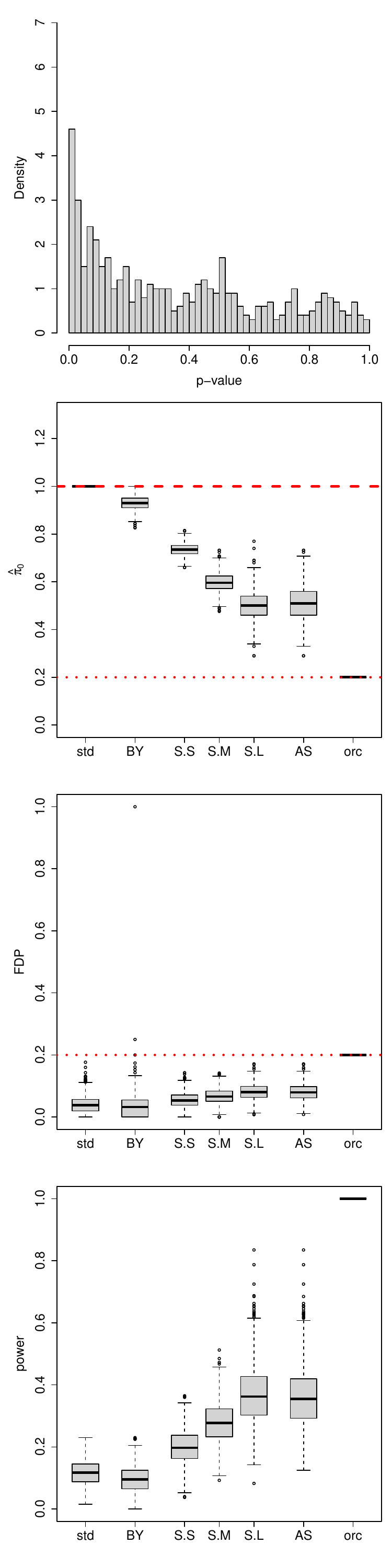}
        \subcaption{Many weak signals}
        \end{minipage}
        \begin{minipage}{0.2\textwidth}
                \centering
                \includegraphics[clip, trim = 1.5cm 0cm 0cm 0cm, height = 16cm]{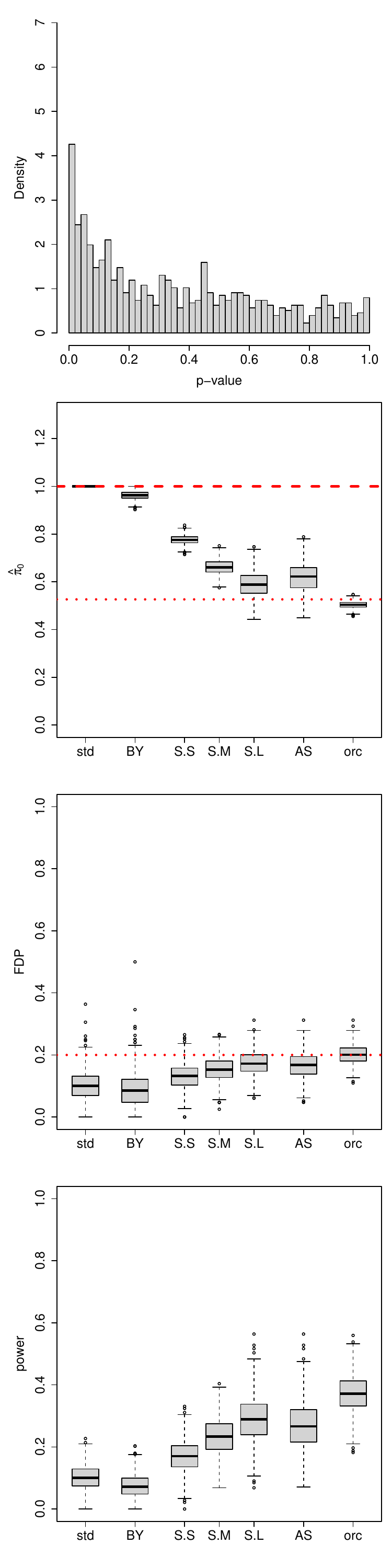}
        \subcaption{Screened}
        \end{minipage}
        \begin{minipage}{0.24\textwidth}
                \centering
                \includegraphics[clip, trim = 1.5cm 0cm 0cm 0cm, height = 16cm]{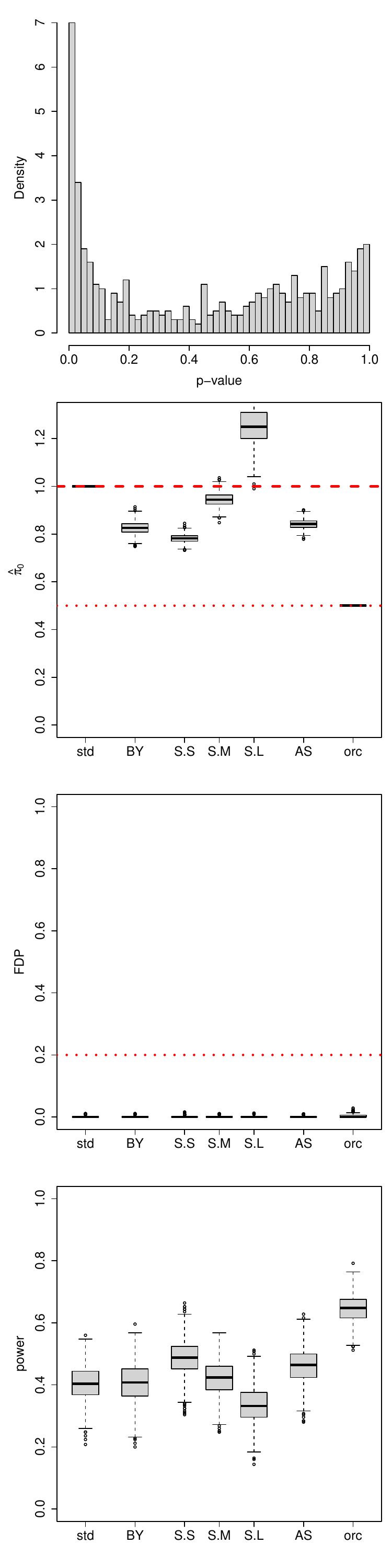}
        \subcaption{Conservative nulls}
        \end{minipage}

            \caption{Comparison of variants of the BH procedure regarding FDR and power.
            The first panel provides the histogram of the p-values from a randomly selected trial of each scenario.
            The second panel demonstrates the estimated null proportion, which is one for the standard BH procedure and $\pi_0$ for the oracle method. 
            The third and the fourth panels display the FDP and the proportion of correct rejections among non-nulls, respectively. 
            We experiment with the four scenarios described in \Cref{sec:simulations}, each repeated $1000$ times.}
        \label{fig:simulation}
\end{figure}

As depicted in \Cref{fig:simulation}, all methods control FDR throughout the four settings.
Regarding power, our proposed method consistently demonstrates high levels of power, while the performance of Storey's method with different pre-fixed $\lambda$ varies across settings, and BY, std generally make less rejections.
More explicitly, in (a), the proportion of null hypotheses is close to one, and all methods perform approximately the same;
in (b) and (c) with a significant proportion of weak non-nulls, the proposed method exhibits higher power than BY, S.S, and S.M with a small or medium $\lambda$;
in (d), S.M, S.L with a medium, large $\lambda$ only make a few rejections due to the conservative null distribution (Beta$(3,1)$), while the proposed method manages to maintain high power.

\section{Real data analysis}\label{sec:real.data}


We revisit the motivating dataset discussed in \Cref{sec:introduction}. The histogram of p-values (\Cref{fig:protein.hist.intro} (a)) displays two characteristics: there are a significant proportion of weak signals, which contribute to the gradual decrease in frequency for p-values in $[0.1, 0.7]$. This pattern is reflected in the simulation setting (b) where a large $\lambda$ is preferred; the null p-values are conservative and result in an increase in frequency at the right end of the p-values' range ($p \in [0.8, 1]$). This trend is reminiscent of the simulation setting (d) where a small $\lambda$ is preferred. 
The presence of two patterns, each favoring a different $\lambda$, adds the complexity to the choice of an appropriate hyper-parameter.

We apply the six methods in \Cref{sec:simulations} at the FDR level $q = 0.2$ to the protein dataset with all the hyper-parameters kept the same.
In \Cref{tab:protein}, we observe that the proposed method with the proposed stopping time~\eqref{eq:stopping.time} adaptively chooses ${\lambda}_{\text{AS}} = 0.68$, close to the global minimizer of $\hat{\pi}_0^{\lambda}$.
In contrast, variants of Storey's method with pre-defined hyper-parameters $\lambda \in \{0.2, 0.5, 0.8\}$ miss the optimal hyper-parameter.
This leads to larger estimates of the null proportion and fewer rejections. 
Finally, the BY procedure makes less rejections than the standard BH since the signals are not sufficiently strong to be picked up by its null proportion estimator.
Furthermore, we perform gene ontology analysis of cellular components \parencite{szklarczyk2023string} and among the $14$ proteins uniquely identified by our proposed method, $8$ proteins---Scn2a1, Itgb1, Lrrc4b, Slc6a9, Kcna2, Igsf21, Hapln4, Pde2a---are expected to be enriched.

\begin{table}[h]
\centering
\begin{tabular}{c|c|c|ccc|c}
\toprule
 & standard & BY & \multicolumn{3}{c|}{Storey} & AS \\
 &  &  &  $\lambda = 0.2$     &   $\lambda = 0.5$    &  $\lambda = 0.8$    &  \\ \midrule
$\hat{\pi}_0$ & $1$ & $1.17$  & $0.85$  & $0.69$      &   $0.72$   & $0.62$ \\
$R$ & $16$ & $12$ & $16$      &   $17$    &  $17$    & $31$ \\ \bottomrule
\end{tabular}
\caption{Estimated null proportion $\hat{\pi}_0$ and the number of rejections $R$ of variants of the BH procedure obtained from the motivating protein dataset.}
\label{tab:protein}
\end{table}


\section{Discussions}\label{sec:discussion}

In \cite{benjamini2006adaptive}, it is noted that Storey's method does not control FDR for dependent p-values with $\lambda = 0.5$, and  \cite{blanchard2009adaptive} show that a smaller $\lambda$ controls the FDR for equi-correlation model.
It would be valuable to investigate whether our adaptive extension is capable of handling dependent p-values beyond the conformal p-values in \Cref{sec:RANC}.

Choosing the stopping time to maximize the power has connections to optimal stopping time theory. 
However, in traditional optimal stopping problems, the context, that is $F$, $F_0$, $F_1$, and $\probNull$ in the two-component mixture model, is typically known in advance, while such information is not available and we learn it as we progress.
It is of interest to investigate whether an optimal stopping time exists in our setting and, if so, how to construct it.


\section*{Acknowledgement}
We would like to express our sincere gratitude to Qingyuan Zhao for the valuable and insightful suggestions regarding the statistical methodology. We genuinely thank Zora Chan for generously providing the protein dataset and assisting with the analysis therein.

\printbibliography

\appendix

\section{Proofs}\label{sec:proofs}

\subsection{FDR control for independent p-values}\label{sec:proof.independent}

\begin{proof}[Proof of \Cref{prop:BH.adaptive.Storey}]
    Recall the filtration
    \begin{align*}
        \calF_{t} = \sigma\left( \sum_{H_i = 0} \1_{\{p_i \ge s\}}, \sum_{i \in [n]} \1_{\{p_i \ge s\}}: q \le s \le t\right), \quad q \le t \le 1,
    \end{align*}
    and the random process
    \begin{align*}
        M_t
        = \frac{1 - t}{1 + \sum_{H_i = 0} \1_{\{p_i \ge t\}}}, \quad q \le t \le 1.
    \end{align*}
    which is non-negative and adapts to $\calF_t$.
    Since $p_i \indep \bm p_{-i}$ for $H_i = 0$, then for $q \le s \le t \le 1$,
    \begin{align*}
        \EE\left[  M_t \mid \calF_s \right]
        &= \EE\left[  \frac{1 - t}{1 + \sum_{H_i = 0} \1_{\{p_i \ge t\}}} \mid \sum_{H_i = 0} \1_{\{p_i \ge s\}} \right].
    \end{align*}
    Under the condition~\eqref{eq:conditional.stochastic.dominance}, 
    \begin{align*}
        \sum_{H_i = 0} \1_{\{p_i \ge t\}} \mid \sum_{H_i = 0} \1_{\{p_i \ge s\}} = K ~\succeq~ \text{Binomial}\left(K, \frac{1-t}{1-s}\right),
    \end{align*} 
    here $\succeq$ denotes stochastic dominance.
    Together with the fact that $\EE[1/(1+X)] \le ((1+K)p)^{-1}$ for $X \sim \text{Binomial}(K, p)$, we obtain
    \begin{align*}
        \EE\left[  M_t \mid \calF_s \right]
        &\le \frac{1 - t}{\left(1 +  \sum_{H_i = 0} \1_{\{p_i \ge s\}}\right) \cdot ({1-t})/({1-s})}
        = M_s,
    \end{align*}
    which implies $M_t$ is a non-negative super-martingale w.r.t. $\calF_t$.
    By the optional stopping time theorem, for any stopping time $\hat{\lambda}$ taking values in $[q,1]$,
    \begin{align*}
         \EE\left[M_{\hat{\lambda}} \right]
         \le \EE\left[M_q \right]
         \le \frac{1-q}{(1 + \NoNull) \cdot (1 - q)}
          = \frac{1}{1 + \NoNull},
    \end{align*}
    where in the second inequality we again use the condition~\eqref{eq:conditional.stochastic.dominance} and the property of Binomial distribution.
    Finally, 
    \begin{align}\label{proof:eq:expectation}
        \EE\left[ \frac{1}{\hat{\pi}_0^{\hat{\lambda}}}\right] 
        = \EE\left[\frac{n\left(1 - \hat{\lambda}\right)}{1 + \sum_{i=1}^n \1_{\{p_i \ge \hat{\lambda}\}}}\right]
        \le \EE\left[\frac{n\left(1 - \hat{\lambda}\right)}{1 + \sum_{H_i=0} \1_{\{p_i \ge \hat{\lambda}\}}}\right]
        = \No \cdot \EE\left[  M_{\hat{\lambda}} \right]
        \le \frac{n}{1+\NoNull}.   
    \end{align}

    Now we prove the BH procedure combined with $\hat{\pi}_0^{\hat{\lambda}}$ controls FDR by the leave-one-out argument.
    For $H_i = 0$, we use $V_i = 1$ to denote $H_i$ being rejected.    
    Define
    \begin{align*}
    \hat{\pi}_0^{\hat{\lambda},-i} := \inf\left\{\hat{\pi}_0^{\hat{\lambda}}(\bm p'): \bm p'_{-i} = \bm p_{-i}, p'_i < q\right\}.
    \end{align*}
    Notice that $\hat{\lambda}$ adapts to $\calF_t$ and does not depend on the p-values smaller than $q$, then $\No \hat{\pi}_0^{\hat{\lambda},-i} = (\No-1)\hat{\pi}_0^{\hat{\lambda}}(\bm p_{-i})$. 
    Since $p_i \indep \bm p_{-i}$, Eq.~\eqref{proof:eq:expectation} applied to $\hat{\pi}_0^{\hat{\lambda}}(\bm p_{-i})$ implies 
    \begin{align}\label{proof:eq:expectation.2}
        \EE\left[ \frac{1}{\hat{\pi}_0^{\hat{\lambda}, -i}}\right] 
        = \frac{\No}{\No - 1} \cdot \EE\left[ \frac{1}{\hat{\pi}_0^{\hat{\lambda}}(\bm p_{-i})}\right] 
        \le \frac{\No}{\No - 1} \cdot \frac{n - 1}{1+(\NoNull-1)}
        = \frac{n}{\NoNull}.   
    \end{align}
    Define
    \begin{align*}
        R^{-i}(\bm p_{-i}) = \sup\left\{|R(\bm p')|: \bm p'_{-i} = \bm p_{-i}, V_i = 1\right\},
    \end{align*}
    Then the adaptive BH procedure satisfies $R = R^{-i}$ given $V_i = 1$, $\bm p_{-i}$.
    Since $V_i = 1$ implies $q_i < q$, we have $\hat{\pi}_0^{\hat{\lambda}} = \hat{\pi}_0^{\hat{\lambda},-i}$ given $V_i = 1$, $\bm p_{-i}$.
    Therefore, 
    \begin{align*}
        \EE\left[\frac{V_i}{R \vee 1}\right]
        &= \EE\left[\frac{V_i \1_{\left\{p_i \le {Rq}/{\left(n \hat{\pi}_0^{\hat{\lambda}}\right)}, p_i < q\right\}}}{R \vee 1}\right]\\
        &= \EE\left[\frac{V_i \1_{\left\{p_i \le {R^{-i}q}/{\left(n \probNull^{\hat{\lambda}, -i}\right)}, p_i < q\right\}}}{R^{-i} \vee 1}\right]\quad (R = R^{-i}, \hat{\pi}_0^{\hat{\lambda}} = \hat{\pi}_0^{\hat{\lambda},-i} ~\text{given}~ V_i = 1, \bm p_{-i})\\
        &\le \frac{q}{n} \cdot \EE\left[\frac{1}{ \probNull^{\hat{\lambda}, -i}}\right]  \quad (p_i \indep \bm p_{-i}) \\
        &\le \frac{q}{n} \cdot \frac{n}{\NoNull}
        = \frac{q}{\NoNull}. \quad (\text{Eq.}~\eqref{proof:eq:expectation.2})
    \end{align*}
    We sum over $H_i = 0$ and finish the proof. 
\end{proof}

\subsection{FDR control for conformal p-values}\label{sec:proof.RANC}

\begin{proof}[Proof of \Cref{prop:BH.adaptive.Storey.RANC}]
    We extend the definitions of $\calF_{t}$ and $M_t$ in the proof of \Cref{prop:BH.adaptive.Storey} to $1/(\NoNc+1) \le t \le 1$.
    Let $t_{\NoNc - k} := 1 - k/(\NoNc + 1)$.
    Under the partial exchangeability condition, for $q \le t_{\NoNc - k} < t_{\NoNc - k + 1} \le 1$, 
    \begin{align*}
        \EE\left[M_{t_{\NoNc - k + 1}} \mid \calF_{t_{\NoNc - k}} \right]
        &= \EE\left[  \frac{k/(\NoNc+1)}{1 + \sum_{H_i = 0} \1_{\{p_i \ge t_{\NoNc - k + 1}\}}} \mid \sum_{H_i = 0} \1_{\{p_i \ge t_{\NoNc - k}\}} \right].
    \end{align*}
    Conditional on $\sum_{H_i = 0} \1_{\{p_i \ge t_{\NoNc - k}\}} = K$, the $(\NoNc - k + 1)$-th smallest negative control test statistic ranks $r \in [K+1]$ among the largest $K$ test statistics under investigation and itself with probability
    \begin{align}\label{proof:eq:rank.probability}
         {\binom{K + k - r}{k - 1}} \cdot {\binom{K + k}{k}}^{-1}
    \end{align}
    under the partial exchangeability.
    Therefore, 
    \begin{align*}
         \EE\left[M_{t_{\NoNc - k + 1}} \mid \sum_{H_i = 0} \1_{\{p_i \ge t_{\NoNc - k}\}} = K \right]
         &= \sum_{r=1}^{K+1} \frac{(k-1)/(\NoNc+1)}{1 + K - (r-1)} \cdot \frac{\binom{K + k - r}{k - 1}}{\binom{K + k}{k}} \\
         &= \frac{1}{\NoNc+1} \cdot \frac{1}{\binom{K + k}{k}} \sum_{r=1}^{K+1} {\binom{K + k - r}{k - 2}} \\
         &= \frac{1}{\NoNc+1} \cdot \frac{1}{\binom{K + k}{k}} \left(\binom{K + k}{k - 1} - 1\right) \quad \left(\sum_{a = b}^{B} \binom{a}{b} = \binom{B+1}{b+1} \right) \\
         &< \frac{1}{\NoNc+1} \cdot \frac{k}{K+1}
         = M_{t_{\NoNc - k}}.
    \end{align*}
    Together with the fact that $M_s \le M_{t_{\NoNc - k}}$ for $t_{\NoNc - k} < s \le t_{\NoNc - k + 1}$, we conclude $M_t$ is a non-negative super-martingale w.r.t. $\calF_t$.
    For $q \ge 1/(\NoNc+1)$,
    \begin{align*}
        \EE\left[M_{q}\right] 
        \le \EE\left[M_{1/(\NoNc+1)}\right] 
        < \frac{1}{\NoNc+1} \cdot \frac{\NoNc}{\NoNull+1}
         < \frac{1}{\NoNull+1}.
    \end{align*} 
    From here, we can follow the proof of \Cref{prop:BH.adaptive.Storey} and employ the optional stopping theorem to show $\EE[ {(\hat{\pi}_0^{\hat{\lambda}})^{-1}}] \le {n}/{(1+\NoNull)}$.

    Now we extend the leave-one-out argument to $p_i$.  
    For each $i \in \hypothesisIndex{}$, we define 
    \begin{align*}
        p_{j}^{-i}(k;\testStatistics{\hypothesisIndex{}-\{i\}},\{\testStatistics{\hypothesisIndex{\text{nc}} \cup \{i\}}\}) &:= \pval{j}\left(\testStatistics{i}~\text{ranks the $(k+1)$-th in}~\{\testStatistics{\hypothesisIndex{\text{nc}} \cup \{i\}}\}; \testStatistics{\hypothesisIndex{}-\{i\}}, \{\testStatistics{\hypothesisIndex{\text{nc}} \cup \{i\}}\}\right), ~j \in \hypothesisIndex{}, \\
        R^{-i}(k;\testStatistics{\hypothesisIndex{}-\{i\}},\{\testStatistics{\hypothesisIndex{\text{nc}} \cup \{i\}}\}) &:= R\left(\testStatistics{i}~\text{ranks the $(k+1)$-th in}~\{\testStatistics{\hypothesisIndex{\text{nc}} \cup \{i\}}\}; \testStatistics{\hypothesisIndex{}-\{i\}}, \{\testStatistics{\hypothesisIndex{\text{nc}} \cup \{i\}}\}\right),\\
        \hat{\pi}_0^{\hat{\lambda},-i}(k;\testStatistics{\hypothesisIndex{}-\{i\}},\{\testStatistics{\hypothesisIndex{\text{nc}} \cup \{i\}}\}) &:= \hat{\pi}_0^{\hat{\lambda}}\left(\testStatistics{i}~\text{ranks the $(k+1)$-th in}~\{\testStatistics{\hypothesisIndex{\text{nc}} \cup \{i\}}\}; \testStatistics{\hypothesisIndex{}-\{i\}}, \{\testStatistics{\hypothesisIndex{\text{nc}} \cup \{i\}}\}\right),\\
        \kappa_i(\testStatistics{\hypothesisIndex{}-\{i\}},\{\testStatistics{\hypothesisIndex{\text{nc}} \cup \{i\}}\}) &:= \max \left\{-1 \le k < q(\NoNc+1) - 1: \frac{1+k}{1 + \NoNc} \le \frac{q R^{-i}(0;\testStatistics{\hypothesisIndex{}-\{i\}},\{\testStatistics{\hypothesisIndex{\text{nc}} \cup \{i\}}\})}{\No \hat{\pi}_0^{\hat{\lambda},-i}(0;\testStatistics{\hypothesisIndex{}-\{i\}},\{\testStatistics{\hypothesisIndex{\text{nc}} \cup \{i\}}\})}\right\}.
    \end{align*}
    We drop $\testStatistics{\hypothesisIndex{}-\{i\}}$, $\{\testStatistics{\hypothesisIndex{\text{nc}} \cup \{i\}}\}$ and use $p_{j}^{-i}(k)$, $R^{-i}(k)$, $\hat{\pi}_0^{\hat{\lambda},-i}(k)$, $\kappa_i$ for notation simplicity.
    Notice that $\hat{\lambda}$ adapts to $\calF_t$ and does not depend on the p-values smaller than $q$, then $\hat{\pi}_0^{\hat{\lambda}}= \hat{\pi}_0^{\hat{\lambda},-i}(0) \mid V_i = 1$, $\testStatistics{\hypothesisIndex{}-\{i\}}, \{\testStatistics{\hypothesisIndex{\text{nc}} \cup \{i\}}\}$.
    Under the partial exchangeability condition, given $p_i = 1/(\NoNc + 1) < q$, Eq.~\eqref{proof:eq:rank.probability} holds and $\EE[(\hat{\pi}_0^{\hat{\lambda}, -i}(0))^{-1}] \le {n}/{\NoNull}$.
    We inherit the notation $V_i$. For $H_i = 0$,
    \begin{align*}
        \EE\left[\frac{V_i}{R \vee 1} \right]
        &= \EE\left[\EE\left[\frac{V_i}{R \vee 1} \mid \testStatistics{\hypothesisIndex{}-\{i\}}, \{\testStatistics{\hypothesisIndex{\text{nc}} \cup \{i\}}\}\right]\right] \quad (\text{tower property})\\
        &= \EE\left[\EE\left[\frac{V_i \1_{\{\pval{i} < q, \pval{i} \le q R/(\No \hat{\pi}_0^{\hat{\lambda}, -i}(0))  \}}}{R \vee 1} \mid \testStatistics{\hypothesisIndex{}-\{i\}}, \{\testStatistics{\hypothesisIndex{\text{nc}} \cup \{i\}}\}\right]\right] \quad (\hat{\pi}_0^{\hat{\lambda}, -i}(0) = \hat{\pi}_0^{\hat{\lambda}} ~\text{if}~ p_i < q) \\
        &= \EE\left[\EE\left[\frac{V_i \1_{\{\pval{i} < q, \pval{i} \le q R^{-i}(0)/(\No \hat{\pi}_0^{\hat{\lambda}, -i}(0))  \}}}{R^{-i}(0) \vee 1} \mid \testStatistics{\hypothesisIndex{}-\{i\}}, \{\testStatistics{\hypothesisIndex{\text{nc}} \cup \{i\}}\}\right]\right] \quad (\text{\Cref{lemm:RANC.stable}})\\
        &\le \EE\left[\EE\left[\frac{\1_{\{\pval{i} < q, \pval{i} \le q R^{-i}(0)/(\No \hat{\pi}_0^{\hat{\lambda}, -i}(0))  \}}}{R^{-i}(0) \vee 1} \mid \testStatistics{\hypothesisIndex{}-\{i\}}, \{\testStatistics{\hypothesisIndex{\text{nc}} \cup \{i\}}\}\right]\right] \quad (V_i \le 1)
        \\
        &= \EE\left[\frac{\EE\left[ \1_{\{k \le \kappa_i\}} \mid \testStatistics{\hypothesisIndex{}-\{i\}}, \{\testStatistics{\hypothesisIndex{\text{nc}} \cup \{i\}}\}\right]}{R^{-i}(0) \vee 1 }  \right] \quad (R^{-i}(0) \sim \sigma(\testStatistics{\hypothesisIndex{}-\{i\}}, \{\testStatistics{\hypothesisIndex{\text{nc}} \cup \{i\}}\})).
    \end{align*}
    Since $\testStatistics{\hypothesisIndex{\text{nc}} \cup \{i\}}$ is exchangeable conditional on $\testStatistics{\hypothesisIndex{} - \{i\}}$, 
    then 
    \begin{align*}
        \EE\left[ \1_{\{k \le \kappa_i\}} \mid \testStatistics{\hypothesisIndex{}-\{i\}}, \{\testStatistics{\hypothesisIndex{\text{nc}} \cup \{i\}}\}\right] = \frac{1+\kappa_i}{1+\NoNc}
        \le \frac{q R^{-i}(0)}{\No \hat{\pi}_0^{\hat{\lambda}, -i}(0)}.
    \end{align*}
    Finally we use the upper bound of $\EE[(\hat{\pi}_0^{\hat{\lambda}, -i}(0))^{-1}]$ and sum over $H_i = 0$.
\end{proof}

\begin{lemma}\label{lemm:RANC.stable}
    Given $\testStatistics{\hypothesisIndex{}-\{i\}}$, $\{\testStatistics{\hypothesisIndex{\text{nc}} \cup \{i\}}\}$, and suppose $\testStatistics{i}$ ranks the $(k+1)$-th in $\{\testStatistics{\hypothesisIndex{\text{nc}} \cup \{i\}}\}$, then $H_i$ is rejected iff $k \le \kappa_i$, and $R^{-i}(k) = R^{-i}(0) \mid V_i = 1$.
\end{lemma}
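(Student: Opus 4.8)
The statement is purely deterministic: we condition on $T_{\calI\setminus\{i\}}$ and on the value multiset $\{v_{(1)}\ge\cdots\ge v_{(\NoNc+1)}\}$ of $\{T_j:j\in\calI_{\mathrm{nc}}\cup\{i\}\}$, and for each rank $k\in\{0,\dots,\NoNc\}$ we consider the \emph{configuration} $k$ in which $T_i=v_{(k+1)}$ and the other $\NoNc$ values go to the negative controls (ties are broken by an independent perturbation, as is standard, so each configuration is well defined); I will write $p_j^{[k]}$, $R^{[k]}:=R^{-i}(k)$, $\hat{\pi}^{[k]}:=\hat{\pi}_0^{\hat{\lambda},-i}(k)$, $\hat\tau^{[k]}$ for the corresponding quantities. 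First I would record from \eqref{eq:pval} that $p_i^{[k]}=(1+k)/(\NoNc+1)$, strictly increasing in $k$, and that for $j\ne i$ one has $p_j^{[k]}=\tilde p_j-\1_{\{T_j\le v_{(k+1)}\}}/(\NoNc+1)$, where $\tilde p_j:=(1+\#\{l:v_{(l)}\ge T_j\})/(\NoNc+1)$ does not depend on $k$; since $v_{(k+1)}$ is non-increasing in $k$, each $p_j^{[k]}$ is non-decreasing in $k$. The bookkeeping fact I will use repeatedly: $p_j^{[k]}\ne p_j^{[0]}$ only when $v_{(k+1)}<T_j\le v_{(1)}$, and in that case $\#\{l:v_{(l)}\ge T_j\}\le k$, so $p_j^{[k]}=\tilde p_j\le(1+k)/(\NoNc+1)=p_i^{[k]}$.

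Next I would show that $\hat\lambda$ and $\hat{\pi}_0^{\hat\lambda}$ are insensitive to $k$ as long as $p_i^{[k]}<q$. Because $\hat\lambda$ is $\calG_t$-measurable it is a function of $s\mapsto\#\{l:p_l\ge s\}$ on $[q,1]$, and $\hat{\pi}_0^{\hat\lambda}$ depends on the $p_l$ only through their values in $[q,1]$. By the bookkeeping fact, any p-value that differs between configurations $0$ and $k$ stays $\le p_i^{[k]}<q$ in configuration $k$ (and is strictly smaller in configuration $0$), so the multiset $\{p_l\}\cap[q,1]$ is identical in the two configurations; hence $\hat\lambda$ and $\hat{\pi}_0^{\hat\lambda}$ coincide, and in particular $\hat{\pi}^{[k]}=\hat{\pi}^{[0]}=:\hat\pi$ whenever $p_i^{[k]}<q$ (this is also the identity $\hat{\pi}_0^{\hat{\lambda}}=\hat{\pi}_0^{\hat{\lambda},-i}(0)$ when $p_i<q$ used in the main proof). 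Since, in addition, every p-value in configuration $k$ is weakly larger than in configuration $0$ while the step-up threshold is computed with the same factor $\hat\pi$, the routine monotonicity of the step-up rule gives $\hat\tau^{[k]}\le\hat\tau^{[0]}$, hence the rejection set in configuration $k$ is contained in that of configuration $0$; in particular $R^{[k]}\le R^{[0]}$.

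Now set $R_0:=R^{[0]}$ and $\theta_0:=qR_0/(\No\hat\pi)$, so that by the definition of $\kappa_i$ one has $k\le\kappa_i\iff p_i^{[k]}<q\text{ and }p_i^{[k]}\le\theta_0$. For the implication $k\le\kappa_i\Rightarrow V_i=1$ and $R^{[k]}=R_0$: then $\kappa_i\ge0$, which forces $R_0\ge1$ and makes the rejection set in configuration $0$ equal to $\{i\}\cup A$ with $|A|=R_0-1$ and $p_j^{[0]}\le\theta_0$ for every $j\in A$; by the bookkeeping fact each $j\in A$ still has $p_j^{[k]}\le\theta_0$ in configuration $k$ (if it moved, its new value is $\le p_i^{[k]}\le\theta_0$), and $p_i^{[k]}\le\theta_0$, so at least $R_0$ p-values are $\le\theta_0$ in configuration $k$; the step-up condition then forces $R^{[k]}\ge R_0$, hence $R^{[k]}=R_0$ by the previous paragraph, exactly $R_0$ p-values are $\le\theta_0$, $p_i^{[k]}$ is among the $R_0$ smallest, and $H_i$ is rejected. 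Conversely, if $V_i=1$ in configuration $k$ then $p_i^{[k]}<q$ and $p_i^{[k]}\le\hat\tau^{[k]}\le qR^{[k]}/(\No\hat{\pi}^{[k]})=qR^{[k]}/(\No\hat\pi)\le qR_0/(\No\hat\pi)=\theta_0$, which is precisely $k\le\kappa_i$; and $R^{[k]}=R_0$ on $\{V_i=1\}$ by the first half. Since $p_i^{[k]}=(1+k)/(\NoNc+1)$, the event $\{k\le\kappa_i\}$ coincides with the indicator condition $\{p_i<q,\ p_i\le qR^{-i}(0)/(\No\hat{\pi}_0^{\hat{\lambda},-i}(0))\}$ appearing in the proof of \Cref{prop:BH.adaptive.Storey.RANC}, which closes the argument.

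The main obstacle I anticipate is the first-step bookkeeping — pinning down exactly which $p_j$, $j\ne i$, move as $k$ varies and establishing the uniform bound $p_j^{[k]}\le p_i^{[k]}$ for those, together with a clean treatment of ties in the test statistics. Everything else ($\calG_t$-measurability of $\hat\lambda$, and monotonicity of the step-up rule under a fixed null-proportion factor) is standard.
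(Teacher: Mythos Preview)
Your proposal is correct and follows essentially the same approach as the paper's proof: both hinge on (i) invariance of $\hat\lambda$ and $\hat\pi_0^{\hat\lambda}$ across configurations with $p_i^{-i}(k)<q$, (ii) monotonicity of the $p_j^{-i}(k)$ in $k$ and hence $R^{-i}(k)\le R^{-i}(0)$, and (iii) a stability step showing that for $k\le\kappa_i$ the configuration-$0$ rejection set survives into configuration $k$, forcing $R^{-i}(k)=R^{-i}(0)$ and $V_i=1$. The only cosmetic difference is that the paper phrases step (iii) via a test-statistic threshold $C_{(\kappa_i+1)}$, whereas your bookkeeping fact (any $p_j$ that moves between configurations $0$ and $k$ stays $\le p_i^{[k]}$) reaches the same conclusion directly at the p-value level.
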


\begin{proof}[Proof of \Cref{lemm:RANC.stable}]

    If $p_i^{-i}(k) < q$, $\hat{\pi}_0^{\hat{\lambda},-i}(0) = \hat{\pi}_0^{\hat{\lambda},-i}(k)$.
    As $k$ increases, $p_{j}^{-i}(k)$ increases, and $R^{-i}(k)$ decreases.
    If $R^{-i}(0) = 0$, then $\kappa_i = -1$ and $H_i$ is never rejected.    
    If $R^{-i}(0) > 0$, then
    \begin{align}\label{proof:eq:RANC.stable}
        \sum_{j \in \hypothesisIndex{}} \1_{\left\{p_j^{-i}(0) \le R^{-i}(0)q/(\No\hat{\pi}_0^{\hat{\lambda},-i}(0)), p_j^{-i}(0) < q \right\}} = R^{-i}(0)
    ~\Longleftrightarrow~
    \sum_{j \in \hypothesisIndex{}} \1_{\{\testStatistics{j} < C_{(\kappa_i+1)}\}} = R^{-i}(0),
    \end{align}
    where $C_{(j)}$ denotes the $j$-th smallest negative control test statistic.
    Given $\testStatistics{\hypothesisIndex{}-\{i\}}$, $\{\testStatistics{\hypothesisIndex{\text{nc}}\cup\{i\}}\}$, 
    \begin{itemize}
        \item [(a)] for $0 \le k \le \kappa_i$, Eq.~\eqref{proof:eq:RANC.stable} holds.
        As a result, $R^{-i}(k) \ge R^{-i}(0)$. Together with $R^{-i}(k) \le R^{-i}(0)$, we have $R^{-i}(k) = R^{-i}(0)$. Since $p_{i}^{-i}(k) = (1+k)/(1+\NoNc) \le (1+\kappa_i)/(1+\NoNc) \le R^{-i}(k)q/(\No\hat{\pi}_0^{\hat{\lambda},-i}(k))$ and $p_{i}^{-i}(k) < q$, then $H_i$ is rejected;
        \item [(b)] for any $k > \kappa_i$, $p_{i}^{-i}(k) \ge q$ or 
        \begin{align*}
            p_{i}^{-i}(k) 
            > \frac{q R^{-i}(0)}{\No\hat{\pi}_0^{\hat{\lambda},-i}(0)} 
            \ge \frac{q R^{-i}(k)}{\No\hat{\pi}_0^{\hat{\lambda},-i}(0)} 
            = \frac{q R^{-i}(k)}{\No\hat{\pi}_0^{\hat{\lambda},-i}(k)},
        \end{align*}
        which implies $H_i$ is not rejected. 
    \end{itemize}    
\end{proof}

\subsection{Convergence rate of the proposed stopping time}

\begin{proof}[Proof of \Cref{prop:convergence.rate}]
    For notation simplicity, we introduce $\bar{F}(t) = 1 - F(t)$, and we write $g(t) = \pi^{t}_{0,\infty} = \bar{F}(t)/(1-t)$, $0 \le t < 1$, and $g(1) = \pi^{1}_{0,\infty} = \overline{\lim}_{t \to 1} \bar{F}(t)/(1-t)$.  
    Since $f$, $f'$ are continuous on $[0,1]$, there exists $f_{\max} > 0$ such that $|f|$, $|f'| \le f_{\max}$ on $[0,1]$. In addition, $g$, $g'$ exist and are continuous on $[0,1]$, and there exists $g_{\max} > 0$ such that $|g|$, $|g'| \le g_{\max}$ on $[0,1]$.
    
    \begin{itemize}
    \item We first prove \Cref{prop:convergence.rate} for the stopping time $\hat{\lambda}_n$ in \eqref{eq:stopping.time}.
    We discuss $q < \lambda^* < 1$, $\lambda^* = q$, and $\lambda^* = 1$, respectively.

    \textbf{Case 1}: $q < \lambda^* < 1$.
    Since $f''$ is continuous on $[0,1]$ and $f'' > 0$ on $[0, 1)$, then by \Cref{lemm:derivative}, $g''(t)$ exists and is continuous, positive on $[0,1)$.
    Therefore, there exists $C_L$, $0 < \eta < (1 - \lambda^*) / 2$ such that $g''(t) \ge C_L$ on $[q, \lambda^* + \eta]$.
    Further, by the mean value theorem, 
     \begin{align}\label{prop:slope.condition}
      g'(t_2)  - g'(t_1) \ge C_L (t_2 - t_1), \quad q \le t_1 < t_2 \le \lambda^* + \eta.
    \end{align}

    Define the events
    \begin{align*}
        \calA_{j+1}^c 
        = \begin{cases}
            \left\{\hat{\pi}_0^{\lambda_j} \le \hat{\pi}_0^{\lambda_{j+1}}\right\}, \quad &  q \le \lambda_j \le \lambda^* - 2 \delta, \\
            \left\{\hat{\pi}_0^{\lambda_j} > \hat{\pi}_0^{\lambda_{j+1}}\right\}, \quad &  \lambda^* + \delta \le \lambda_j < \lambda^* + \eta - \delta. 
        \end{cases}
    \end{align*}
    Notice that 
    \begin{align*}
         \left\{\hat{\pi}_0^{\lambda_j} \le \hat{\pi}_0^{\lambda_{j+1}}\right\} 
         &= \left\{\left(\hat{\pi}_0^{\lambda_{j+1}} - g(\lambda_{j+1})\right) - \left(\hat{\pi}_0^{\lambda_{j}} - g(\lambda_{j})\right) 
        \ge - \left(g(\lambda_{j+1}) - g(\lambda_{j})\right) \right\},\\
        \left\{\hat{\pi}_0^{\lambda_j} > \hat{\pi}_0^{\lambda_{j+1}}\right\}
        &=
        \left\{\left(\hat{\pi}_0^{\lambda_{j+1}} - g(\lambda_{j+1})\right) - \left(\hat{\pi}_0^{\lambda_{j}} - g(\lambda_{j})\right) 
        < - \left(g(\lambda_{j+1}) - g(\lambda_{j})\right) \right\},
    \end{align*}
    and then we analyze $g(\lambda_{j+1}) - g(\lambda_{j})$ 
    and $\left(\hat{\pi}_0^{\lambda_{j+1}} - g(\lambda_{j+1})\right) - \left(\hat{\pi}_0^{\lambda_{j}} - g(\lambda_{j})\right)$ separately.

    Since $g'(t)$ exists and is continuous on $[0,1]$, then $g'(\lambda^*) = 0$.
    For $q \le \lambda_{j} \le \lambda^* - 2 \delta$, 
    \begin{align}\label{proof:eq:mean.bound}
    \begin{split}
        g(\lambda_{j+1}) - g(\lambda_{j})
        &= g'(\lambda_{j} + \gamma \delta) \cdot \delta \quad (\text{for some $\gamma \in [0,1]$ by the mean value theorem})\\
        &=  \left(g'(\lambda_{j} + \gamma \delta) - g'(\lambda^*)\right) \cdot \delta \quad (g'(\lambda^*) = 0) \\
        &\le - C_L(\lambda^* -(\lambda_{j} + \gamma \delta)) \cdot \delta 
        \quad (\text{Condition}~\eqref{prop:slope.condition}) \\
        &\le - C_L(\lambda^* - \lambda_{j+1}) \delta.  
    \end{split}
    \end{align}
    Similarly for $ \lambda^* + \delta \le \lambda_{j} \le \lambda^* + \eta - \delta$,
    \begin{align}\label{proof:eq:mean.bound.2}
         \begin{split}
            g(\lambda_{j+1}) - g(\lambda_{j})
            > C_L(\lambda_{j} - \lambda^*) \delta.  
        \end{split}        
    \end{align}

    Next, for $q \le \lambda_{j} \le \lambda^* + \eta - \delta$,
    \begin{align*}
         &\quad\left(\hat{\pi}_0^{\lambda_{j+1}} - g(\lambda_{j+1})\right) - \left(\hat{\pi}_0^{\lambda_{j}} - g(\lambda_{j})\right) \\
        &= \frac{1+\sum_{i=1}^{\No} \left(\1_{\left\{p_i \ge \lambda_{j+1}\right\}} - \bar{F}(\lambda_{j+1})\right)}{\No(1 - \lambda_{j+1})} - \frac{1 + \left(\1_{\left\{p_i \ge \lambda_{j}\right\}} - \bar{F}(\lambda_{j})\right)}{\No(1 - \lambda_{j})}\\
        &= -\underbrace{\frac{\sum_{i=1}^{\No} \1_{\left\{\lambda_{j} \le p_i < \lambda_{j+1}\right\}} - \left(\bar{F}(\lambda_{j}) - \bar{F}(\lambda_{j+1}) \right)}{\No(1 - \lambda_{j})}}_{:=(\text{I})}
        + \underbrace{\frac{\delta + \delta \sum_{i=1}^{\No} \left(\1_{\left\{p_i \ge \lambda_{j+1}\right\}} - \bar{F}(\lambda_{j+1})\right)}{\No(1 - \lambda_{j})(1 - \lambda_{j+1})}}_{:=(\text{II})}.
    \end{align*}
    For (II), since $\1_{\left\{p_i \ge \lambda_{j+1}\right\}} - \bar{F}(\lambda_{j+1}) \in [-1,1]$ is zero mean, and its variance is upper bounded by $\bar{F}(\lambda_{j+1}) = \bar{F}(\lambda_{j+1}) - \bar{F}(1) \le f_{\max} (1-\lambda_{j+1})$, then by Bernstein's inequality and $1 - (\lambda^* + \eta) \ge (1 - \lambda^*)/2$,
    \begin{align*}
        &\quad~\PP\left(\calB_{j+1}^c := \left\{\left|\frac{\sum_{i=1}^{\No} \1_{\left\{p_i \ge \lambda_{j+1}\right\}} - \bar{F}(\lambda_{j+1})}{\No(1 - \lambda_{j+1})} \right|
        \ge (C_L(1-\lambda^*)/8) \cdot \delta \right\}\right) \\
        &\le 2\exp\left\{-\frac{\left((C_L(1-\lambda^*)/8) (1 - \lambda_{j+1})\No \delta\right)^2/2}{\No f_{\max} (1-\lambda_{j+1}) + (C_L(1-\lambda^*)/8) (1 - \lambda_{j+1})\No \delta/3} \right\} \\
        &
        \le 2\exp\left\{-\frac{C_L^2(1 - \lambda^*)^2 \No\delta^2/128}{f_{\max} + C_L(1-\lambda^*)\delta/24} \right\}.
    \end{align*}
    Further by the union bound,
    \begin{align*}
         \PP\left(\calB := \cap_{q \le \lambda_{j} \le \lambda^* + \eta - \delta} ~\calB_{j+1} \right)
        &\ge 1 - \sum_{q \le \lambda_{j} \le \lambda^* + \eta - \delta} \PP\left(\calB_{j+1}^c \right) \\
        &\ge 1 - \frac{2}{\delta} \exp\left\{-\frac{C_L^2(1 - \lambda^*)^2 \No\delta^2/128}{f_{\max} + C_L(1-\lambda^*)\delta/24} \right\}
        := 1 - p_{\calB^c}.
    \end{align*}
    There exists $N > 0$ such that $p_{\calB^c} = O(\No^{1/3}e^{-\No^{1/3}}) \le M^{-3/2}$ for $\No \ge N$.
    For (I), since $\1_{\left\{\lambda_{j} \le p_i < \lambda_{j+1}\right\}} - \left(\bar{F}(\lambda_{j}) - \bar{F}(\lambda_{j+1}) \right) \in [-1,1]$ is zero mean, and its variance is upper bounded by $\bar{F}(\lambda_{j}) - \bar{F}(\lambda_{j+1}) \le f_{\max} (\lambda_{j+1} - \lambda_{j}) = f_{\max} \delta$. 
    Again by Bernstein's inequality,
    \begin{align*}
        &\PP\left(\calD_{j+1}^c := \left\{|(\text{I})| \ge  C_L(\lambda^* - \lambda_{j+1}) \delta/2 \right \} \right)
        \le 2\exp\left\{-\frac{C_L^2(\lambda^* - \lambda_{j+1})^2(1 - \lambda_{j})^2 \No\delta/8}{f_{\max} + C_L(\lambda^* - \lambda_{j+1})(1 - \lambda_{j})/6} \right\},\quad  \lambda_{j} \le \lambda^* - 2 \delta,\\
        &\PP\left(\calD_{j+1}^c := \left\{|(\text{I})| \ge  C_L(\lambda_j - \lambda^*) \delta/2 \right \} \right)
        \le 2\exp\left\{-\frac{C_L^2(\lambda_j - \lambda^*)^2(1 - \lambda_{j})^2 \No\delta/8}{f_{\max} + C_L(\lambda_j - \lambda^*)(1 - \lambda_{j})/6} \right\}, \quad  \lambda_{j} \ge \lambda^* + \delta.
    \end{align*} 
    Further by the union bound and $1 - (\lambda^* + \eta) > (1-\lambda^*)/2$,
    \begin{align*}
        &\quad~\PP\left(\calD := \left(\cap_{q \le \lambda_{j} \le \lambda^* - 2\delta} ~\calD_{j+1} \right) \cap \left(\cap_{ \lambda^* + \delta \le \lambda_j \le
\lambda^* + \eta-\delta} ~\calD_{j+1} \right) \right)\\
        &\ge 1 - \sum_{q \le \lambda_{j} \le \lambda^* - 2 \delta}
        \exp\left\{-\frac{C_L^2(\lambda^* - \lambda_{j+1})^2(1 - \lambda_{j})^2 \No\delta/8}{f_{\max} + C_L(\lambda^* - \lambda_{j})(1 - \lambda_{j})/6} \right\}\\
        &\quad~- \sum_{\lambda^* + \delta \le \lambda_{j} \le \lambda^* + \eta - \delta}
        \exp\left\{-\frac{C_L^2(\lambda_j - \lambda^*)^2(1 - \lambda_{j})^2 \No\delta/8}{f_{\max} + C_L(\lambda_j - \lambda^*)(1 - \lambda_{j})/6} \right\}\\ 
        &\ge 1 - \frac{1}{\delta}\int_{q}^{\lambda^* + \eta} \exp\left\{-C_L'(1 - \lambda^*)^2\No\delta (\lambda^* - \lambda)^2 \right\} d \lambda \quad (C_L' = C_L^2 / 8 / 4 / (f_{\max} + C_L/6) )\\
        &= 1 - \frac{1}{\delta}\int_{-\eta}^{\lambda^* - q} \exp\left\{-C_L'(1 - \lambda^*)^2\No\delta\lambda^2 \right\} d \lambda \quad (\lambda \leftarrow \lambda^* - \lambda)\\
        &\ge 1 - \frac{1}{\delta}\int_{-\infty}^{\infty} \exp\left\{-C_L'(1 - \lambda^*)^2\No\delta \lambda^2 \right\} d \lambda\\
        &= 1 - \frac{1}{\delta} \sqrt{\frac{\pi }{C_L'(1 - \lambda^*)^2\No\delta }}  
        = 1 - \sqrt{\frac{\pi }{C_L'(1 - \lambda^*)^2 M^3 }}
        := 1 - p_{\calD^c}. \quad \left(\int_{-\infty}^{\infty} e^{-\lambda^2/2\sigma^2} d \lambda = \sqrt{2{\pi\sigma^2}}\right)
    \end{align*}
    Finally, there exists $N > 0$ such that for $\No \ge N$,
    \begin{align*}
        C_L(\lambda^* - \lambda_{j+1}) \delta 
        - \frac{\delta + (C_L(1-\lambda^*)/8) \delta^2\No(1 - \lambda_{j+1})}{\No (1 - \lambda_{j}) (1 - \lambda_{j+1})}
        &\ge C_L (\lambda^* - \lambda_{j+1}) \delta /2,\quad \lambda_{j+1} \le \lambda^* - \delta, \\
        C_L(\lambda_j - \lambda^*) \delta 
        - \frac{\delta + (C_L(1-\lambda^*)/8) \delta^2\No(1 - \lambda_{j+1})}{\No (1 - \lambda_{j}) (1 - \lambda_{j+1})}
        &\ge C_L (\lambda_j - \lambda^*) \delta /2, \quad  \lambda^* + \delta \le \lambda_j \le \lambda^* + 2\delta.
    \end{align*}
    Then $\calB \cap \calD  \subseteq \cap_{q \le \lambda_{j} \le \lambda^* - 2\delta} ~\calA_{j+1} \subseteq \{\hat{\lambda} \ge \lambda^* - \delta\}$ and $\calB \cap \calD \subseteq \calA_{\lceil\lambda^*-q\rceil/\delta+2} \subseteq \{\hat{\lambda} \le \lambda^* + 3 \delta\}$ imply
    \begin{align*}
        \PP\left(\lambda^* - \delta \le \hat{\lambda} \le \lambda^* + 3 \delta \right)
        \ge \PP\left(\calB \cap \calD \right)
        \ge 1 - p_{\calB^c} - p_{\calD^c}.
    \end{align*}
    On $\calB \cap \calD$, we also obtain by $|g'(t)| \le g_{\max}$,
    \begin{align*}
        \hat{\pi}_0^{\hat{\lambda}}
        \le \hat{\pi}_0^{{\lambda}_{\lceil\lambda^*-q\rceil/\delta-1}}
        \le g({\lambda}_{\lceil\lambda^*-q\rceil/\delta-1}) + (C_L(1-\lambda^*)/8) \cdot \delta
        \le g(\lambda^*) + g_{\max} \delta + (C_L(1-\lambda^*)/8) \cdot \delta. 
    \end{align*}

    \textbf{Case 2}: $\lambda^* = q$. Notice that $\calB \cap \calD \subseteq \calA_{2} \subseteq \{\hat{\lambda} \le \lambda^* + 3 \delta\}$ still holds, thus $\PP\left(\hat{\lambda} \le \lambda^* + 3\delta \right) \ge 1 - p_{\calB^c} - p_{\calD^c}$. 
    The analysis of $\hat{\pi}_0^{\hat{\lambda}}$ is similar to that of case 1.

    \textbf{Case 3}: $\lambda^* = 1$. For any $0 < \varepsilon < 1$, 
    \begin{align*}
        \PP(\hat{\lambda} \ge 1 - \varepsilon)
        &\ge \PP\left(\left(\cap_{q \le \lambda_{j} \le (1-\varepsilon)} ~\calB_{j+1} \right) \cap \left(\cap_{q \le \lambda_{j} \le (1-\varepsilon)} ~\calD_{j+1} \right) \right)\\
        &\ge 1 - \frac{2}{\delta} \exp\left\{-\frac{C_L^2(1 - \varepsilon)^2 \No\delta^2/32}{f_{\max} + C_L(1-\varepsilon)\delta/12} \right\}
        - \sqrt{\frac{\pi }{4 C_L'(1 - \varepsilon)^2 M^3}} \\
        &\to 1 - \sqrt{\frac{\pi }{4 C_L'(1 - \varepsilon)^2 M^3}}, \quad n \to \infty.
    \end{align*}
    Let $M = \log(\No) \to \infty$ and we have the consistency of $\hat{\lambda}$.
    On $\left(\cap_{q \le \lambda_{j} \le (1-\varepsilon)} ~\calB_{j+1} \right) \cap \left(\cap_{q \le \lambda_{j} \le (1-\varepsilon)} ~\calD_{j+1} \right)$,
    \begin{align*}
        \hat{\pi}_0^{\hat{\lambda}}
        \le \hat{\pi}_0^{{\lambda}_{\lceil 1 - \varepsilon -q\rceil/\delta-1}} 
        \le g({\lambda}_{\lceil 1 - \varepsilon -q\rceil/\delta-1}) + (C_L(1-\varepsilon)/4) \cdot \delta 
        \le g(1) + g_{\max} (\varepsilon + \delta) + (C_L(1-\varepsilon)/4) \cdot \delta. 
    \end{align*}

    \item We next prove \Cref{prop:convergence.rate} for the robust version stopping time $\hat{\lambda}_n'$ in \eqref{eq:stopping.time.robust}. The proof largely follows that of the non-robust version except that we need to additionally characterize the estimated variance $\hat{V}^{\lambda}$.   
    
    \textbf{Case 1}: $q < \lambda^* < 1$.
    Define the events
    \begin{align*}
        \calA_{j+1}^c 
        = \begin{cases}
            \left\{\hat{\pi}_0^{\lambda_j} + \hat{V}^{\lambda_j} \le \hat{\pi}_0^{\lambda_{j+1}} + \hat{V}^{\lambda_{j+1}}\right\}, \quad &  q \le \lambda_j \le \lambda^* - 2 \delta, \\
            \left\{\hat{\pi}_0^{\lambda_j} + \hat{V}^{\lambda_j} \le \hat{\pi}_0^{\lambda_{j+1}} + \hat{V}^{\lambda_{j+1}}\right\}, \quad &  \lambda^* + \delta \le \lambda_j < \lambda^* + \eta - \delta. 
        \end{cases}
    \end{align*}
    Notice that 
    \begin{align*}
         \left\{\hat{\pi}_0^{\lambda_j} + \hat{V}^{\lambda_j} \le \hat{\pi}_0^{\lambda_{j+1}} + \hat{V}^{\lambda_{j+1}}\right\} 
         &\subseteq \left\{\left(\hat{\pi}_0^{\lambda_{j+1}} - g(\lambda_{j+1})\right) - \left(\hat{\pi}_0^{\lambda_{j}} - g(\lambda_{j})\right) 
        \ge - \left(g(\lambda_{j+1}) - g(\lambda_{j})\right) - \hat{V}^{\lambda_{j+1}} \right\},\\
        \left\{\hat{\pi}_0^{\lambda_j} + \hat{V}^{\lambda_j} \ge \hat{\pi}_0^{\lambda_{j+1}} + \hat{V}^{\lambda_{j+1}}\right\}
        &~\subseteq
        \left\{\left(\hat{\pi}_0^{\lambda_{j+1}} - g(\lambda_{j+1})\right) - \left(\hat{\pi}_0^{\lambda_{j}} - g(\lambda_{j})\right) 
        \le - \left(g(\lambda_{j+1}) - g(\lambda_{j})\right) + \hat{V}^{\lambda_{j}} \right\},
    \end{align*}
    and we only need to analyze $\hat{V}^{\lambda_{j}}$ additionally.

    For $q \le \lambda_{j} \le \lambda^* + \eta$,
    \begin{align}\label{proof:eq:variance.bound}
        \hat{V}^{\lambda_{j}} 
        \le \frac{1}{\No} \cdot \frac{1}{4(1-\lambda_{j})^2}
        \le \frac{1}{\No} \cdot \frac{1}{(1-\lambda^*)^2}.
    \end{align}   
    There exists $N > 0$ such that for $\No \ge N$,
    \begin{align*}
        C_L(\lambda^* - \lambda_{j+1}) \delta -  \frac{1}{\No(1-\lambda^*)^2} - \frac{\delta + (C_L(1-\lambda^*)/8) \delta^2\No(1 - \lambda_{j+1})}{\No (1 - \lambda_{j}) (1 - \lambda_{j+1})}
        &\ge C_L (\lambda^* - \lambda_{j+1}) \delta /2,\quad \lambda_{j+1} \le \lambda^* - \delta, \\
        C_L(\lambda_j - \lambda^*) \delta - \frac{1}{\No(1-\lambda^*)^2} - \frac{\delta + (C_L(1-\lambda^*)/8) \delta^2\No(1 - \lambda_{j+1})}{\No (1 - \lambda_{j}) (1 - \lambda_{j+1})}
        &\ge C_L (\lambda_j - \lambda^*) \delta /2, \quad  \lambda_j \ge \lambda^* + \delta.
    \end{align*}
    Then $\calB \cap \calD  \subseteq \cap_{q \le \lambda_{j} \le \lambda^* - 2\delta} ~\calA_{j+1} \subseteq \{\hat{\lambda}' \ge \lambda^* - \delta\}$ and $\calB \cap \calD \subseteq \calA_{\lceil\lambda^*-q\rceil/\delta+2} \subseteq \{\hat{\lambda}' \le \lambda^* + 3 \delta\}$ imply
    \begin{align*}
        \PP\left(\lambda^* - \delta \le \hat{\lambda}' \le \lambda^* + 3 \delta \right)
        \ge \PP\left(\calB \cap \calD \right)
        \ge 1 - p_{\calB^c} - p_{\calD^c}.
    \end{align*}
    On $\calB \cap \calD$, we also obtain
    \begin{align*}
        \hat{\pi}_0^{\hat{\lambda}'}
        &\le \hat{\pi}_0^{{\lambda}_{\lceil\lambda^*-q\rceil/\delta-1}} + \hat{V}^{{\lambda}_{\lceil\lambda^*-q\rceil/\delta-1}} \\
        &\le g({\lambda}_{\lceil\lambda^*-q\rceil/\delta-1}) + (C_L(1-\lambda^*)/8) \cdot \delta + \frac{1}{\No (1 - \lambda^*)^2}\\
        &\le g(\lambda^*) + g_{\max} \delta + (C_L(1-\lambda^*)/8) \cdot \delta + \frac{1}{\No (1 - \lambda^*)^2}. \quad (|g'(t)| \le g_{\max})
    \end{align*}

    \textbf{Case 2}: $\lambda^* = q$. The same argument as that of $\hat{\lambda}_{\No}$.

    \textbf{Case 3}: $\lambda^* = 1$. For any $0 < \varepsilon < 1$, 
    \begin{align*}
        \PP(\hat{\lambda}' \ge 1 - \varepsilon)
        &\ge \PP\left(\left(\cap_{q \le \lambda_{j} \le (1-\varepsilon)} ~\calB_{j+1} \right) \cap \left(\cap_{q \le \lambda_{j} \le (1-\varepsilon)} ~\calD_{j+1} \right) \right)\\
        &\ge 1 - \frac{2}{\delta} \exp\left\{-\frac{C_L^2(1 - \varepsilon)^2 \No\delta^2/32}{f_{\max} + C_L(1-\varepsilon)\delta/12} \right\}
        - \sqrt{\frac{\pi }{4 C_L'(1 - \varepsilon)^2 M^3}} \\
        &\to 1 - \sqrt{\frac{\pi }{4 C_L'(1 - \varepsilon)^2 M^3}}, \quad n \to \infty.
    \end{align*}
    Let $M \to \infty$ gives the consistency results.
    On $\left(\cap_{q \le \lambda_{j} \le (1-\varepsilon)} ~\calB_{j+1} \right) \cap \left(\cap_{q \le \lambda_{j} \le (1-\varepsilon)} ~\calD_{j+1} \right)$,
    \begin{align*}
        \hat{\pi}_0^{\hat{\lambda}'}
        &\le \hat{\pi}_0^{{\lambda}_{\lceil 1 - \varepsilon -q\rceil/\delta-1}} + \hat{V}^{{\lambda}_{\lceil 1 - \varepsilon -q\rceil/\delta-1}} \\
        &\le g({\lambda}_{\lceil 1 - \varepsilon -q\rceil/\delta-1}) + (C_L(1-\varepsilon)/4) \cdot \delta + \frac{1}{4 \No (1 - \varepsilon)^2}\\
        &\le g(1) + g_{\max} (\varepsilon + \delta) + (C_L(1-\varepsilon)/4) \cdot \delta + \frac{1}{4 \No (1 - \varepsilon)^2}.
    \end{align*}

    \end{itemize}
    
\end{proof}

\begin{lemma}\label{lemm:derivative}
    Suppose for some $m \in \NN$, $\frac{d f^k}{d t^k}(t)$ exists and is continuous on $[0,1]$, $0 \le k \le m$. Then
    \begin{align}\label{eq:lemm:derivative}
        \frac{d g^k}{d t^k}(t)
        = \frac{k!}{(1-t)^{k+1}} \int_t^1 \int_t^{t_1} \cdots \int_t^{t_k}  \frac{d f^k}{d t^k}(t_{k+1}) ~d t_{k+1} \cdots d t_2 d t_1, \quad \forall~0 \le k \le m.
    \end{align}
\end{lemma}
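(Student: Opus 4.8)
The plan is to induct on $k$, writing $f^{(k)}$ for $\tfrac{d f^k}{dt^k}$ and exploiting the single algebraic identity $(1-t)\,g(t)=\bar{F}(t)$ together with $\bar{F}'=-f$. For $k=0$ the right-hand side of~\eqref{eq:lemm:derivative} is $\tfrac{1}{1-t}\int_t^1 f(t_1)\,dt_1=\tfrac{\bar{F}(t)}{1-t}=g(t)$, so the base case is immediate.

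For the inductive step I would first record a recurrence for the derivatives of $g$. Since $f\in C^m[0,1]$ we have $\bar{F}\in C^{m+1}[0,1]$, so $g$ is $C^m$ on $[0,1)$, and a Taylor expansion of $\bar{F}$ at $t=1$, namely $\bar{F}(t)=\sum_{i=0}^{m}\tfrac{(-1)^i f^{(i)}(1)}{(i+1)!}(1-t)^{i+1}+o((1-t)^{m+1})$, shows that $g$ extends to a $C^m$ function on all of $[0,1]$ with $g^{(k)}(1)=f^{(k)}(1)/(k+1)$. Applying the Leibniz rule to $(1-t)g(t)=\bar{F}(t)$, and using that $1-t$ has vanishing derivatives of order $\ge 2$ while $\bar{F}^{(k)}=-f^{(k-1)}$, gives, for $1\le k\le m$,
\begin{align*}
  (1-t)\,g^{(k)}(t) \;=\; k\,g^{(k-1)}(t)-f^{(k-1)}(t),\qquad t\in[0,1].
\end{align*}
Next, writing $J_k(t)$ for the $(k{+}1)$-fold iterated integral on the right of~\eqref{eq:lemm:derivative}, so that the claim reads $g^{(k)}=\tfrac{k!}{(1-t)^{k+1}}J_k$, I would verify the matching recurrence for $J_k$: evaluating the innermost integral, $\int_t^{t_k}f^{(k)}(t_{k+1})\,dt_{k+1}=f^{(k-1)}(t_k)-f^{(k-1)}(t)$, and using the simplex-volume identity $\int_t^1\!\int_t^{t_1}\!\cdots\!\int_t^{t_{k-1}}dt_k\cdots dt_1=(1-t)^k/k!$, one obtains $J_k(t)=J_{k-1}(t)-\tfrac{(1-t)^k}{k!}f^{(k-1)}(t)$. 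Substituting the induction hypothesis $g^{(k-1)}=\tfrac{(k-1)!}{(1-t)^k}J_{k-1}$ into the $g$-recurrence and then using the $J$-recurrence yields $g^{(k)}(t)=\tfrac{k!}{(1-t)^{k+1}}J_k(t)$ on $[0,1)$; equality at $t=1$ follows by continuity, since Cauchy's repeated-integration formula rewrites $J_k(t)=\tfrac{1}{k!}\int_t^1(1-s)^k f^{(k)}(s)\,ds$, whence $\tfrac{k!}{(1-t)^{k+1}}J_k(t)\to f^{(k)}(1)/(k+1)=g^{(k)}(1)$ as $t\to 1^-$ by L'Hôpital.

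\textbf{Main obstacle.} The only genuinely delicate point is the endpoint $t=1$, where $g=\bar{F}/(1-t)$ is a $0/0$ expression and the right-hand side of~\eqref{eq:lemm:derivative} is likewise a $0/0$ limit: one must separately argue that $g$ really is $m$ times continuously differentiable up to $t=1$ (via the Taylor expansion above, together with the fact that each one-sided limit $\lim_{t\to1^-}g^{(j)}(t)$ exists) and that~\eqref{eq:lemm:derivative} persists there by passing to the limit. Away from $t=1$ everything reduces to routine bookkeeping with iterated integrals and the two short computations flagged above, so this boundary argument is where the actual care is required.
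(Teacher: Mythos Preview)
Your proof is correct and, like the paper's, proceeds by induction on $k$ using the simplex-volume identity $\int_t^1\!\cdots\!\int_t^{t_{k-1}}dt_k\cdots dt_1=(1-t)^k/k!$. The organization of the inductive step differs, however. The paper simply differentiates the right-hand side of~\eqref{eq:lemm:derivative} at level $k$: the factor $(1-t)^{-(k+1)}$ produces one term, and differentiating the iterated integral in its lower limits produces another (only the innermost limit contributes, since setting any earlier $t_i=t$ collapses a subsequent integral to zero); the two terms are then recombined into the level-$(k{+}1)$ formula. You instead derive the clean recurrence $(1-t)g^{(k)}=kg^{(k-1)}-f^{(k-1)}$ from Leibniz applied to $(1-t)g=\bar F$, and a matching recurrence $J_k=J_{k-1}-\tfrac{(1-t)^k}{k!}f^{(k-1)}$ by evaluating the innermost integral, then match them. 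Your route avoids the (slightly delicate) bookkeeping of differentiating a nested integral with multiple $t$-dependent lower limits, at the cost of introducing two auxiliary recurrences; the paper's route is shorter once one accepts that only the innermost lower limit survives. A genuine bonus in your write-up is the explicit treatment of the endpoint $t=1$ (via the Taylor expansion of $\bar F$ and the Cauchy repeated-integration formula for $J_k$), which the paper omits but which is needed if one wants~\eqref{eq:lemm:derivative} to hold on the closed interval as stated.
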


\begin{proof}[Proof of \Cref{lemm:derivative}]
    We prove by induction. For $k = 0$,
    \begin{align*}
        g(t) 
        = \frac{\bar{F}(t)}{1 - t}
        = \frac{\int_t^{1} f(t_1) ~d t_1}{1 - t}.
    \end{align*}
    Assume Eq.~\eqref{eq:lemm:derivative} is true for $k$, then consider the case of $k+1$, 
    \begin{align}\label{proof:eq:integral}
        \begin{split}
            \frac{d g^{k+1}}{d t^{k+1}}(t) 
            &= \frac{d}{dt}\frac{d g^{k}}{d t^k}(t)\\ 
            &= \frac{(k+1)!}{(1-t)^{k+2}} \int_t^1 \int_t^{t_1} \cdots \int_t^{t_k}  \frac{d f^k}{d t^k}(t_{k+1}) ~d t_{k+1} \cdots d t_2 d t_1 \\
            &\quad~+ \frac{k!}{(1-t)^{k+1}} \int_t^1 \int_t^{t_1} \cdots \int_t^{t_{k-1}} \left(-\frac{d f^{k}}{d t^{k}}(t) \right) d t_{k} \cdots d t_2 d t_1.
        \end{split}
    \end{align}
Notice that
\begin{align}\label{proof:eq:integral.2}
    \begin{split}
    &\quad~\frac{k!}{(1-t)^{k+1}} \int_t^1 \int_t^{t_1} \cdots \int_t^{t_{k-1}} \left(-\frac{d f^{k}}{d t^{k}}(t) \right) d t_{k} \cdots d t_2 d t_1 \\
    &= \frac{k!}{(1-t)^{k+1}} \cdot \left(-\frac{d f^{k}}{d t^{k}}(t) \cdot \frac{(1-t)^{k}}{k!}\right) \\
    &= \frac{(k+1)!}{(1-t)^{k+2}} \cdot \left(-\frac{d f^{k}}{d t^{k}}(t) \cdot \frac{(1-t)^{k+1}}{(k+1)!}\right) \\
     &= \frac{(k+1)!}{(1-t)^{k+2}} \int_t^1 \int_t^{t_1} \cdots \int_t^{t_k} \left(-\frac{d f^k}{d t^k}(t) \right) d t_{k+1} \cdots d t_2 d t_1.
 \end{split}
\end{align}
Plug Eq.~\eqref{proof:eq:integral.2} into Eq.~\eqref{proof:eq:integral}, 
\begin{align*}
     \frac{d g^{k+1}}{d t^{k+1}}(t) 
     &= \frac{(k+1)!}{(1-t)^{k+2}} \int_t^1 \int_t^{t_1} \cdots \int_t^{t_k} \frac{d f^k}{d t^k}(t_{k+1}) -\frac{d f^k}{d t^k}(t) ~d t_{k+1} \cdots d t_2 d t_1 \\
     &= \frac{(k+1)!}{(1-t)^{k+2}} \int_t^1 \int_t^{t_1} \cdots \int_t^{t_k} \int_t^{t_{k+1}} \frac{d f^{k+1}}{d t^{k+1}}(t_{k+2}) ~d t_{k+2} d t_{k+1} \cdots d t_2 d t_1,
\end{align*}
and we have finished the proof.
\end{proof}

\end{document}